\newtheorem{theorem}{Theorem}
\newtheorem{lemma}{Lemma}
\newtheorem{corollary}{Corollary}
\theoremstyle{definition}
\newtheorem{claim}{Claim}
\newcommand{\opt}{\text{opt}}
\newcommand{\OPT}{\text{OPT}}
\newcommand{\LB}{\text{lb}}
\renewcommand{\E}{\mathbb{E}}
\newcommand{\cvrpmd}{\textsc{CVRP-MD}\xspace}
\newcommand{\cvrp}{\textsc{CVRP}\xspace}
\title{Approximating Multiple-Depot Capacitated Vehicle Routing\\ via LP Rounding}
\author{
    \begin{minipage}[t]{.3\textwidth}
        \centering
        Zachary Friggstad\thanks{Research supported by an NSERC Discovery Grant.}\\
        University of Alberta
    \end{minipage}
    \begin{minipage}[t]{.3\textwidth}
        \centering
        Tobias Mömke\thanks{Partially supported by DFG Grant 439522729 (Heisenberg-Grant).}\\
        University of Augsburg
    \end{minipage}
}
\date{}
\begin{document}
\maketitle

\begin{abstract}
In \emph{Capacitated Vehicle Routing with Multiple Depots} (\cvrpmd) we are given a set of client locations $C$ and a set of depots $R$ located in a metric space with costs $c(i,j)$ between $u,v \in C \cup R$. Additionally, we are given a capacity bound $k$.
The goal is to find a collection of tours of minimum total cost such that each tour starts and ends at some depot $r \in R$ and includes at most $k$ clients and such that each client lies on at least one tour.
Our main result is a $3.9365$-approximation based on rounding a new LP relaxation for \cvrpmd.
\end{abstract}

%%%%%%%%%%%%%%%%%%%%%%%%%%%%%%%%%%%%%%%%%%%%%%%%%%%%%%%%%%%%%%%%%%%%%%%%
%%%%%%%%%%%%%%%%%%%%%%%%%%%%%%%%%%%%%%%%%%%%%%%%%%%%%%%%%%%%%%%%%%%%%%%%

\section{Introduction}

In \emph{Capacitated Vehicle Routing with Multiple Depots} (\cvrpmd) we are given a set of client locations $C$ and a set of depots $R$ located in a metric space with costs $c(i,j)$ between $u,v \in C \cup R$. Additionally, we are given a capacity bound $k$.
The goal is to find a collection of tours of minimum total cost such that each tour starts and ends at some depot $r \in R$ and includes at most $k$ clients and such that each client lies on at least one tour.

The problem is well-studied in the Operations Research literature, as it has relevant applications such as delivering goods from multiple production facilities to consumers. For an overview of applied results we refer to \cite{MJI+15}. Notably, in the overview they write that most exact algorithms for solving the classical vehicle routing are difficult to be adapted for
solving \cvrpmd.

As \cvrpmd generalizes capacitated vehicle routing (\cvrp), it is APX-hard even for $k=3$~\cite{asano1997covering}. We therefore focus on approximation algorithms.
The first approximation algorithm for \cvrpmd was a $(2\alpha+1)$-approximation by Li and Simchi-Levi~\cite{li1990worst} where $\alpha$ is the approximation guarantee for the Traveling Salesman Problem (TSP) in the corresponding metric. Later, a $4$-approximation using a completely different approach was given by Harks, König and Matuschke~\cite{HKM13} in the course of studying a more general problem. We give a more detailed overview of these algorithms in Section~\ref{sec:review} as they are both relevant to our results.

Combining modern TSP results~\cite{karlin2021slightly,OG2} with \cite{li1990worst} gives a $(4 - 10^{-36})$-approximation for \cvrpmd.  To date, the best \cvrpmd approximation is by  Zhao an Xiao~\cite{ZX23}. The authors combine a recent approximation algorithm for \cvrp of Blauth, Traub and Vygen~\cite{blauth2021improving} with the results above and obtain an approximation ratio of $4 - 2 \cdot \delta \approx 3.9993$ where $\delta \approx 1/3000$ is a constant used in the work of Blauth, Traub, and Vygen.

Our main result is a new approximation algorithm with improved approximation ratio.
\begin{theorem}\label{thm:main}
There is a polynomial-time randomized algorithm for \cvrpmd that finds a solution whose expected cost is at most $3.9365$ times the optimum solution's cost.
\end{theorem}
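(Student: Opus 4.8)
My plan is to design an LP relaxation that captures the two classical lower bounds for \cvrpmd{} in a single coupled fashion — not as an additive pair — and then round it with a tour-splitting scheme whose analysis plays the two bounds against each other. The first bound is the \emph{radial bound} $\LB_1:=\frac{2}{k}\sum_{v\in C}c(v,R)$ with $c(v,R)=\min_{r\in R}c(v,r)$, from the Haimovich--Rinnooy Kan amortization (a tour through a depot $r$ and a client set $S$ with $|S|\le k$ costs at least $\frac{2}{|S|}\sum_{v\in S}c(v,r)\ge\frac{2}{k}\sum_{v\in S}c(v,R)$; summing over OPT's tours, every client is counted at least once, so $\LB_1\le\OPT$). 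The second is the \emph{routing bound} $T^\star$, the minimum cost of a spanning forest of $C\cup R$ in which every component contains a depot — equivalently the MST of $C\cup\{r^\star\}$ after contracting $R$ to a single node $r^\star$, the contracted distances $c(u,r^\star):=\min_r c(u,r)$ still forming a metric — which satisfies $T^\star\le\OPT$ since OPT's edges, with depots contracted, form a connected spanning subgraph of $C\cup\{r^\star\}$. One cannot simply minimize $\sum_e c(e)x_e+\LB_1$ subject to capacity-aware cut inequalities on $x$, because on instances where OPT's edges already \emph{are} the amortized radial trips this double-counts and the LP overshoots $\OPT$. Instead I would introduce, besides an edge vector $x\ge 0$ on the complete graph over $C\cup R$, auxiliary variables splitting each client's unit of service between a ``routed'' part carried by the shared structure $x$ (governed by cut inequalities that strengthen the usual capacitated-flow inequalities $x(\delta(S))\ge\frac{2}{k}|S|$ with the connectivity surplus $x(\delta(S))\ge 2$ per served client) and a ``batched radial'' part charged at the amortized rate $2/k$ but constrained so that this discount is only available when those trips are genuinely shared; an optimal \cvrpmd{} solution maps to a feasible point of no greater cost, so $\mathrm{LP}\le\OPT$, while $\mathrm{LP}$ can strictly exceed $\max\{\LB_1,T^\star\}$. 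Getting these coupling inequalities exactly right — strong enough that the rounding loses little, weak enough that $\mathrm{LP}\le\OPT$ — is the heart of the argument and the step I expect to be hardest.

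For the rounding, I would extract a forest from the cut part of the optimal $x^*$: after scaling so that every nonempty $S\subseteq C$ has at least one unit of $x^*$-mass crossing $\delta(S)$, the scaled vector lies in the dominant of the spanning-connected-subgraph polyhedron of $C\cup\{r^\star\}$ and hence dominates a distribution over spanning trees of that graph, so there is a spanning tree of bounded cost; un-contracting $r^\star$ yields a forest $F$ with each component anchored at a real depot, $c(F)$ controlled by the routed mass of $\sum_e c(e)x^*_e$. I would make $F$ Eulerian via a $T$-join rather than by doubling (to pay a factor $\tfrac32$ in place of $2$), and assign each client $v$ the nearest depot $\sigma(v)$ in the support of its routed-mode variables, which has distance at most $v$'s fractional radial cost. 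Then, per component: shortcut the Eulerian walk to a cycle through that component's clients and its depot, pick a uniformly random rotation, cut the cycle into paths of at most $k$ consecutive clients, delete the cut edges, and close each path $v_i v_{i+1}\cdots v_j$ into a legal tour by joining both endpoints to the single depot $\sigma(v_i)$; by the triangle inequality the extra cost is at most $2c(v_i,\sigma(v_i))$ plus the length of the path $v_i\cdots v_j$.

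Summing over paths, the path-length terms total at most the cost of the Eulerian structure, and under the random rotation each client heads a path with probability at most $1/k$, so $\E\bigl[\sum_{\text{paths}}2c(v_i,\sigma(v_i))\bigr]\le\frac{2}{k}\sum_v c(v,\sigma(v))$, which the LP charges to its radial mass. This gives a guarantee of the form $a\cdot(\text{routed LP mass})+b\cdot(\text{radial LP mass})$; on its own it only reaches about $4$, because the detour from detaching a path and reattaching it at $\sigma(v_i)$ drives $a$ up toward $2$ in the worst case. To break the barrier I would also run a complementary algorithm — a refinement of the Li--Simchi-Levi scheme that contracts all depots, builds one near-optimal TSP tour on $C\cup\{r^\star\}$ using the current-best TSP approximation, partitions it, and reattaches segments — whose guarantee is a \emph{different} linear function of the routed and radial LP masses and is strong precisely when the routing mass dominates, then output the cheaper solution. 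The point is that the LP constraints forbid both linear bounds from being simultaneously close to $4\cdot\mathrm{LP}$: a large radial mass forces a cheap routed structure (hence small $T^\star$ and small TSP lower bound) and conversely. Optimizing the free parameters — the $T$-join versus doubling trade-off, the handling of the random rotation, and the threshold between the two algorithms — against the LP-feasible region is then a small quasi-convex optimization, and I expect its optimum to be exactly $3.9365$.
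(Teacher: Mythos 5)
There is a genuine gap, and it is the one you flagged yourself: you never pin down the ``coupling inequalities'' of the LP, yet they carry the entire argument. Without them you have no proof that the LP value is at most $\OPT$, no way to quantify the ``routed mass'' and ``radial mass'' that your analysis manipulates, and no guarantee that the LP-feasible region actually forbids the bad corner where both your routing-heavy bound and your radial-heavy bound are near $4\cdot\mathrm{LP}$. The paper resolves this concretely: it writes a bidirected-flow relaxation with variables $z^r_{v,u}$ (meaning ``$u$ is on the tour from depot $r$ whose farthest client is $v$'') and $x^r_{v,e}$ (two units of $r$--$v$ flow per such tour), with the key bound $\sum_{u}z^r_{v,u}\le k\,z^r_{v,v}$ and the monotonicity $z^r_{v,u}=0$ when $c(u,r)>c(v,r)$. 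These encode exactly the coupling you gesture at, and it is precisely this structure that makes the downstream charging go through.

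Your rounding is also a different, and weaker, mechanism. Extracting a \emph{spanning} forest from the cut relaxation, parity-correcting with a $T$-join, and segmenting with a random rotation gives $\approx 2\cdot(1.5\,\OPT)+\LB\le 4\cdot\OPT$ (note the tour you form from a segment $v_i\cdots v_j$ and its depot still doubles the segment cost via the triangle inequality, so the $T$-join savings do not survive the reattachment step; your stated bound of ``the length of the path plus $2c(v_i,\sigma(v_i))$'' drops a needed factor of $2$ on the path). Your proposed escape — run Li--Simchi-Levi with a near-optimal TSP tour as a complementary algorithm and take the better of the two — is essentially the Zhao--Xiao combination, which the paper explicitly notes yields only an improvement of order $10^{-3}$ over $4$, not $3.9365$. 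You assert the final quasi-convex optimization will land at $3.9365$, but that number is the conclusion, not something your construction forces. The paper's actual route to beating $4$ is qualitatively different: it samples a sub-spanning collection of depot-rooted \emph{paths} from the LP flows via the Bang-Jensen--Frank--Jackson preflow-to-branching decomposition (so each client is covered only with probability $\ge 1-e^{-\gamma}$, at expected path cost $\gamma(1+\delta)\opt_{LP}$), grafts the missed clients in with a forest whose expected cost is bounded by $e^{-\gamma}\opt_{LP}$ via a Bridge-Lemma-style argument, and then converts paths-plus-small-trees into tours with a tree-pruning routine and a short/tall tree split that caps the forest-edge charge at $2.5$ and the radial charge at $1/\beta$. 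The win comes from trading path cost against coverage probability by choosing $\gamma\approx 0.468$, a degree of freedom your scheme does not have.
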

Our improved approximation algorithm combines several key ideas from previous work in \cvrpmd plus branching decompositions of preflows~\cite{BJ95}. In fact, our new approach is compatible with the approach by Blauth, Traub, and Vygen and can be slightly improved by an amount in the order of $10^{-3}$ just like the \cvrpmd improvement in \cite{ZX23}.

%%%%%%%%%%%%%%%%%%%%%%%%%%%%%%%%%%%%%%%%%%%%%%%%%%%%%%%%%%%%%%%%%%%%%%%%
%%%%%%%%%%%%%%%%%%%%%%%%%%%%%%%%%%%%%%%%%%%%%%%%%%%%%%%%%%%%%%%%%%%%%%%%

\subsection{Notation and Preliminaries}
We view the given metric as a complete graph with vertices $V = C \cup R$ and edge distances given by $c(u,v), u,v \in C \cup R$. Throughout, we let $\OPT$ denote an optimal solution and $\opt$ the optimum solution cost. For a collection of edges $S$ (perhaps with multiple copies of an edge) we let $c(S)$ denote the total cost of all copies of edges in $S$. 
We view a tour $T$ as the multiset of edges, if $T$ only visits one client then it would contain two copies of the edge from the depot to the client.
For brevity, we let $r_T \in R$ denote the root of a tour $T$. For any multiset of edges $F$, we let $C_F$ denote all clients that are the endpoint of at least one edge in $F$. We will only use this notation when $F$ is a tour or a tree.

Most CVRP approximations in general metrics rely on the following \emph{radial lower bound}:
\[ \LB := \frac{2}{k} \sum_{v \in C} c(v,R) \]
where $c(v,R)$ denotes $\min_{r \in R} c(v,r)$. 
The following was proven by Haimovich and Rinnooy Kan~\cite{haimovich1985bounds} for the single-depot case, the proof extends immediately to the multiple-depot case as noted in \cite{li1990worst}.

\begin{lemma}[Haimovich, Rinnooy Kan~\cite{haimovich1985bounds}]\label{lem:lb_old}
$\LB \leq \opt$
\end{lemma}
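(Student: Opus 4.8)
The plan is to run the standard Haimovich--Rinnooy Kan averaging argument on an optimal solution, charging the radial cost of each client to whichever optimal tour contains it. Fix a tour $T$ of $\OPT$, rooted at $r = r_T$, and write its cyclic vertex sequence as $r = u_0, u_1, \dots, u_m, u_{m+1} = r$, where $u_1, \dots, u_m$ are the clients of $T$ (so $m \le k$). For each client $u_i$, the prefix walk $u_0 u_1 \cdots u_i$ and the suffix walk $u_i u_{i+1} \cdots u_{m+1}$ are two $r$--$u_i$ walks whose edge multisets together partition the edge multiset of $T$; by the triangle inequality each has cost at least $c(r, u_i) \ge c(u_i, R)$. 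Summing over $i = 1, \dots, m$, each edge $u_{j-1} u_j$ of $T$ is used in the prefix walk for $u_i$ exactly when $i \ge j$ and in the suffix walk exactly when $i \le j-1$, hence in $(m-j+1) + (j-1) = m$ of these $2m$ walks in total. Therefore $m \cdot c(T) \ge 2 \sum_{i=1}^{m} c(u_i, R)$, and since $m \le k$ and $c(T) \ge 0$ we get $k \cdot c(T) \ge 2 \sum_{v \in C_T} c(v, R)$.

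To conclude, sum this per-tour inequality over all tours $T$ of $\OPT$:
\[ k \cdot \opt \;=\; \sum_{T \in \OPT} k\, c(T) \;\ge\; 2 \sum_{T \in \OPT} \sum_{v \in C_T} c(v, R) \;\ge\; 2 \sum_{v \in C} c(v, R), \]
where the last inequality uses that every client lies on at least one tour of $\OPT$ together with the fact that $c(\cdot, R) \ge 0$. Dividing by $k$ yields $\opt \ge \frac{2}{k} \sum_{v \in C} c(v, R) = \LB$.

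The only points needing care are minor: the combinatorial bookkeeping that each edge of a tour is counted exactly $m$ times across the $2m$ radial walks (so the per-tour bound is clean and has the right constant $2/k$), and the observation that a tour may visit fewer than $k$ clients, or a client may be covered by several tours — both handled purely by monotonicity since all edge costs are nonnegative. I do not expect any genuine obstacle here; this is exactly the single-depot argument of \cite{haimovich1985bounds}, with the only change being that in the final step one bounds $c(r_T, u_i) \ge \min_{r \in R} c(u_i, r) = c(u_i, R)$ rather than using a fixed depot, as observed in \cite{li1990worst}.
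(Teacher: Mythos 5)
Your proof is correct and is essentially the same radial averaging argument as the paper's: for each tour $T$ and each client $v$ on it, the two $r_T$--$v$ walks obtained by cutting $T$ at $v$ each cost at least $c(v,R)$, giving $2\,c(v,R) \le c(T)$, after which averaging over the at-most-$k$ clients of $T$ and summing over tours finishes the job. The only cosmetic difference is that the paper obtains $2\,c(v,r) \le c(T)$ directly by shortcutting $T$ to the two-node tour $\{r,v\}$, whereas you rederive the same per-client bound via the explicit prefix/suffix edge-counting (which, as you yourself note, is already implied by the fact that the prefix and suffix walks partition $T$'s edges).
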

\begin{proof}
Consider a tour $T$ in the optimum solution. For each $v \in C_T$, if we shortcut $T$ past all nodes except $r$ and $v$ then we see $2 \cdot c(v,R) \leq 2 \cdot c(v,r) \leq 2 \cdot c(T)$. 
Since $|C_T| \leq k$, we have $\frac{2}{k} \sum_{v \in C_V} c(v,R) \leq c(T)$. 
Summing over all tours $T$ in the optimum solution and recalling each client lies on a tour shows $\LB = \frac{2}{k} \sum_{v \in C} c(v,R) \leq \opt$.
\end{proof}

For each client $v \in C$, we say that $\frac{2}{k} \cdot c(v,R)$ is the radial lower bound contribution of $v$.

%%%%%%%%%%%%%%%%%%%%%%%%%%%%%%%%%%%%%%%%%%%%%%%%%%%%%%%%%%%%%%%%%%%%%%%%

\subsection{Review of Previous Approximations}\label{sec:review}
Our approach incorporates elements of two previous approximations. We briefly summarize them below plus the way that the \cvrp algorithm by Blauth, Traub, and Vygen~\cite{blauth2021improving} was adapted to get the slightly improved approximation for \cvrpmd by Zhao and Xiao \cite{ZX23}.

~

\noindent
{\bf Tree Splitting} by Harks, König and Matuschke~\cite{HKM13}.\\
The algorithm starts with computing the minimum-cost $R$-rooted spanning forest, i.e., a collection of trees spanning $C \cup R$ where each tree contains a single $r \in R$ which we view as the root of that tree. 
Note that the cost of this forest is at most $\opt$ since the optimum solution, viewed as a collection of edges, ensures that each client is in the same component as some depot. If some tree $T$ in the forest includes more than $k$ clients we repeat the following until it does not.

First, pick the deepest node $v$ of $T$ whose subtree contains more than $k$ clients. Let $u_1, \ldots, u_{a}$ be the children of $v$, so each contains at most $k$ nodes in their subtree by our choice of $v$. 
Group these subtrees together arbitrarily so each group has at most $k$ nodes but two groups cannot be merged into a single group (i.e., any two groups together have more than $k$ clients). 
There must be at least one group with at least $k/2$ clients: otherwise, if there were two groups, they could be merged and if there is one group with less than $k/2$ clients then the total size of the subtree rooted at $v$ (including $v$ itself) is an integer less than $k/2 + 1$, so it must be at most $k$ which contradicts our choice of $v$.

For any group with at least $k/2$ clients, we turn it into a tour spanning all clients in the trees by doubling the trees including the parent edges to $v$. 
Graft this tour into the depots by picking a client $v'$ on the tour uniformly at random and adding the two copies of the edge $v'r$, where $r$ is the depot nearest to $v'$. 
Now shortcut the resulting Eulerian tour past repeated occurrences of nodes and also past {\em all} occurrences of $v$ (so $v$ is not on the tour). 
Also remove the subtrees of these groups from the tree $T$.
After repeating this pruning step until $T$ has at most $k$ clients, we get one last tour by doubling all edges on $T$ and letting the root of $T$ serve as the depot for this tour.

For a tour that was split off of the tree, each client $v'$ has a probability of at most $2/k$ of being picked to connect to its nearest depot $r$. 
Two copies of the edge $v'r$ were added in this case, so the expected total cost of all edges added in this way is at most $\frac{2}{k} \cdot \sum_{v' \in C} 2 \cdot c(v',R) = 2 \cdot \LB$.  Since the forest has cost at most $\opt$ and since we only doubled its edges to form tours, the total cost of the final solution is at most $2 \cdot \LB + 2 \cdot \opt \leq 4 \cdot \opt$ in expectation.  The algorithm is easy to derandomize by picking the nearest depot to each tour that is split off rather than of picking a client randomly and connecting it to its nearest depot.

~

\noindent
{\bf Tour Splitting} by Li and Simchi-Levi~\cite{li1990worst}\\
This algorithm is more along the line of the classic tour-splitting algorithm for single-depot CVRP. 
It starts by considering the metric obtained by contracting all of $R$ to a single node $v_R$ and recomputing the shortest path distances. 
It uses an $\alpha$-approximation for TSP in this contracted metric. 
Replacing each edge of the TSP tour by its shortest path in the contracted graph yields a collection of $v_R$-rooted tours. 
These tours lift to walks in the original graph whose endpoints lie in $R$ (some walks may start and end at different depots). The cost of these walks is at most $\alpha \cdot \opt$
and each client lies on at least one walk. 
Next, shortcut each of these and remove the endpoint (a depot) to get a collection of paths $P$, each starting at some $r \in R$ and each client lying on exactly one path.

Finally, standard tour partition is performed: for each such path $P$ we pick every $k$-th client along $P$ starting with an initial random offset from the depot on $P$. 
For each chosen client $v$ a tour is formed by starting at the depot closest to $v$ and visiting the next $k-1$ clients after $v$ on $P$ (or to the end of $P$ if there are fewer clients). The remaining clients before the first one that is picked are connected to the start node $r'$ of $P$ (a depot) by doubling the subpath from $r'$ to these remaining clients.

The total cost of the doubled subpaths is at most $2 \alpha \cdot \opt$ and the total cost of the doubled edges connecting sampled clients to their nearest depot is at most $\frac{2}{k} \cdot \sum_{v \in C} c(v,R) = \LB$ in expectation. 
The total cost is then at most $2\alpha \cdot \opt + \LB$ which, given the current state of the art for TSP, is at most $(4 - 10^{-36}) \cdot \opt$ \cite{karlin2021slightly,OG2}.

~

\noindent{\bf Combining Approaches}\\
Zhao and Xiao~\cite{ZX23} note that the improved algorithm for \textsc{CVRP} by Blauth, Traub and Vygen~\cite{blauth2021improving} yields an improved approximation for \cvrpmd as follows. 
If $\LB \leq (1-\delta) \cdot \opt$ for some constant $\delta > 0$ then the tree splitting approach is already better\footnote{As is tour splitting, but the improvement is greater with tree splitting given the $1.5-10^{-36}$ TSP approximation}. 
Otherwise, in the contracted metric from the tour splitting approach it is possible to use the algorithm by \cite{blauth2021improving} to get a TSP solution whose cost is at most $(1+f(\delta)) \cdot \opt$ where $f(\delta)$ is some function that vanishes as $\delta \rightarrow 0$. 
Following the rest of the tour splitting approach finds a solution with cost at most $2 \cdot (1 + f(\delta)) \cdot \opt + \LB \leq (3 + 2 \cdot f(\delta)) \cdot \opt$. 
By picking $\delta = 1/3000$ as in \cite{blauth2021improving} they get a $(4-1/1500)$-approximation.

%%%%%%%%%%%%%%%%%%%%%%%%%%%%%%%%%%%%%%%%%%%%%%%%%%%%%%%%%%%%%%%%%%%%%%%%

\subsection{Our Approach}

Our approach combines elements of tree splitting and tour splitting along with a bidirected cut relaxation for the problem. First, notice that the tour splitting approach would yield a much better approximation if it was possible to get a TSP tour whose cost is considerably better than $\alpha \cdot \opt$. That is essentially what is done in \cite{ZX23} to get their $(4-1/1500)$-approximation but in their approach the tradeoff between the two approaches based on how close $\LB$ is to $\opt$ requires $\LB$ to be very close to $\opt$ to see any improvement in the final approximation guarantee.

We do considerably better in this tradeoff, but at the expense of not spanning all nodes.
For now, let $\delta$ be such that $\LB = (1-\delta) \cdot \opt$. Our actual definition of $\delta$ is slightly different but the above simple definition serves to explain the high level approach we take.

That is, using an LP relaxation and a preflow-splitting result by Bang-Jensen et al.~\cite{BJ95}, we sample a collection of paths rooted at the depots whose expected cost is $(1+\delta) \cdot \gamma \cdot \opt$ for some constant $\gamma$ close to $1/2$. These paths may not span all nodes, but we ensure any client $v \in C$ has a probability at most $e^{-\gamma}$ of not being spanned. Using an adaptation of the Bridge Lemma for vehicle routing problems in \cite{BFMS25}, we can graft in the missing clients into the sampled paths with a forest whose cost is at most $e^{-\gamma} \cdot \opt$ in expectation.\footnote{The approach in \cite{BFMS25} is to use the preflow-splitting result of \cite{BJ95} and to apply an argument related to the bridge lemma of \cite{BGRS13} in order to collect vertices not covered by branchings. In our new result, we use a similar starting point, but we obtain the preflows from a more involved LP and use the result from \cite{BFMS25} as a black box for collecting the missing vertices. A similar approach of sampling a tree and grafting in missing nodes is also taken in \cite{AMN24}.
}

We are therefore left with $R$-rooted paths spanning some clients and a forest of trees that grafts the missing nodes into these paths. Our high-level goal is to turn these paths and trees into feasible tours. 
We introduce extensions of both tree splitting and tour splitting to obtain these tours while charging their cost to:
\begin{itemize}
    \item At most $2$ times the cost of all edges on the paths.
    \item At most $1$ times the radial lower bound contribution of clients covered by the paths.
    \item At most $2.5$ times the cost of the forest.
    \item At most $1.694323$ times the radial lower bound contribution for clients covered by the forest.
\end{itemize}
Some intricate cancellation happens: recall that in expectation, the cost of the paths is at most $\gamma \cdot (1+\delta) \cdot \opt$ where $\gamma < 1/2$ and the cost of the forest is at most $e^{-\gamma} \cdot \opt$. All clients' radial lower bounds are charged at least once, and an additional at most $0.694323$ times if they are not covered by the paths, which happens only with probability at most $e^{-\gamma}$. Using $\LB \leq (1-\delta) \cdot \opt$, a calculation shows that the positive dependence on $\delta$ charged to the edge costs is overshadowed by the negative dependence on $\delta$ in the radial lower bound charges. For the optimal choice of $\gamma$, this leaves us with an approximation guarantee of $3.9365 - 0.49826 \cdot \delta$.

~

\noindent
{\bf New Tree Splitting}\\
For our tree pruning procedure, we first consider pruning a group of subtrees if the cost of the tour can be charged to $2.5$ times the edge cost in these subtrees plus $1.694323$ times the radial lower bound of the clients. If this is not possible, we consider a new step that tries to cover exactly $k$ clients by completely covering all clients in one group of subtrees and an appropriate number of carefully chosen clients in another subtree: if this can be done in a way that admits the same charging scheme to edges in the completely-removed subtrees and radial lower bound contribution of covered clients, we do it. We show that when this is not possible, the tree has fewer than $k$ clients plus additional properties on its costs that enable the next step.

~

\noindent
{\bf New Tour Splitting}\\
Once all trees are ``small'' after tree pruning, we consider each path $P$ and the short trees rooted in $P$. Using a careful DFS ordering of the trees, we obtain a single path $P'$ that is split in a manner similar to \cite{li1990worst}, except additional care is taken. That is, if a tour split occurs within one of the short trees rooted in $P$ then some of the edges of the tree would be charged more than twice in the final tours. We take a more careful approach to how we deal with a split in the middle of a tree to limit the extent to which its edges are charged more than twice, this is enabled through properties of the short trees that are produced by the tree pruning step and may result in some clients in the short trees being selected more than once to connect to their nearest depot. Ultimately, in this step we still ensure that each edge of a small tree is charged at most $2.5$ times and each client covered by these trees has its radial lower bound charged to an extent of $1.69432$, just like with the tree pruning phase. Furthermore, the edges on the paths and radial lower bounds of clients on the paths will be charged to an extent of $2$ and $1$, respectively.

We finish this overview by remarking that our approach is compatible with the Blauth, Traub, and Vygen approach to \textsc{CVRP} \cite{blauth2021improving}. 
That is, if $\LB$ is smaller than $\opt$ by some constant factor then our new algorithm is even better than our worst-case analysis and if $\LB$ is very close to $\opt$ then an extension of \cite{blauth2021improving} to \cvrpmd described in \cite{ZX23} will also give an approximation guarantee better than our algorithm. So our final approximation guarantee can further be improved by a small constant in the order of $10^{-3}$.

%%%%%%%%%%%%%%%%%%%%%%%%%%%%%%%%%%%%%%%%%%%%%%%%%%%%%%%%%%%%%%%%%%%%%%%%

\section{\cvrpmd Algorithm}\label{sec:alg}

We begin by giving an LP relaxation and describing how to sample paths covering some clients. After this is presented, in Section \ref{sec:summary} we provide pseudocode for the entire algorithm which calls on two subroutines that are fully described and analyzed in later sections.

We can assume $k \geq 3$ as the case $k = 1$ is trivial and the case $k = 2$ can be solved by a minimum-cost matching algorithm (the cost of a matching between clients being the cheapest tour that serves them both) allowing for a client to be matched with itself.

We also assume $c(v,R) > 0$ for each client $v \in C$ as we could trivially cover clients with $c(v,R) = 0$ using 0-cost singleton tours. For a client $v \in C$ we let $\ell_v = \frac{2}{k} \cdot c(v,R)$ denote its radial lower bound contribution and for $C' \subseteq C$ we let $\ell(C') = \sum_{v \in C} \ell_v$ be the contribution of clients in $C'$ to $\LB$. In particular, $\ell(C) = \LB$.

Consider the following LP relaxation in the bidirected graph obtained by considering both directions of any edge to be different directed edges. Here, variable $z^r_{v,v}$ indicates that there is a tour from depot $r \in R$ that has $v$ as its furthest node (breaking ties arbitrarily) and $z^r_{v,u}$ for $u \neq v$ indicates $u$ is covered by the corresponding tour from $r$ that has $v$ as its furthest node.
We view all such tours as being two $r-v$ paths and we model them in the LP as $2$ units of flow from $r$ to $v$ in the directed graph. For a directed edge $e$, let $x^r_{v,e}$ indicate that edge $e$ is traversed in the forward direction among the $2$ paths from $r$ to $v$.

\begin{alignat*}{2}
{\bf minimize}\quad \sum_{r \in R} \sum_{v \in C} \sum_e c(e) x^r_{v,e}&&& \label{lp:cvrp} \tag{{\bf LP-CVRP-MD}}\\
{\bf s.t.}\qquad \qquad
 x^r_v(\delta^{out}(u)) &=  \left\{\begin{array}{rl} 2 \cdot z^r_{v,v} & \text{ if } u = r \\ 0 & \text{ if } u = v \\ z^r_{v,u} & \text{ otherwise}\end{array}\right. && \forall~ r \in R, u,v \in V \\
 x^r_v(\delta^{in}(u)) &=  \left\{\begin{array}{rl} 0 & \text{ if } u = r \\ 2 \cdot z^r_{v,v} & \text{ if } u = v \\ z^r_{v,u} & \text{ otherwise}\end{array}\right. && \forall~ r \in R, u,v \in V \\
 x^r_v(\delta^{in}(S)) &\geq  z^r_{v,u} && \forall~ r \in R, u,v \in V, \{u\} \subseteq S \subseteq V-\{r\} \\
 \sum_{r \in R} \sum_{v \in C} z^r_{v,u} &=  1 && \forall~ u \in V \\
 z^r_{v,u} &\leq  z^r_{v,v} && \forall~ r \in R, u,v \in C \\
 z^r_{v,u} &=  0 && \forall~ r \in R, u,v \in C \text{ s.t. }c(u,r) > c(v,r) \\
 \sum_{u \in C} z^r_{v,u} &\leq  k \cdot z^r_{v,v}&& \forall~r \in R, v \in V\\
 x,z &\geq  0&&
\end{alignat*}

Here and throughout the rest of the description of the algorithm, we define $\delta$ to be such that $\sum_{r,v} 2 c(v,r) \cdot z^r_{v,v} = (1-\delta) \opt_{LP}$ where $\opt_{LP}$ refers to the optimum solution value of the LP relaxation. For each $r \in R, v \in C$, since the vector $x^r_v$ is sending $2z^r_{v,v}$ units of $r-v$ flow then its cost is at least $2 c(v,r)$. So $\delta \in [0,1]$.

\begin{lemma}\label{lem:lb}
$\opt_{LP} \leq \opt$ and $\LB \leq (1-\delta) \cdot \opt_{LP}$.
\end{lemma}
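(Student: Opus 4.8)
The plan is to establish the two inequalities separately; both are routine, with essentially all of the care concentrated in the first.

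For $\opt_{LP}\le\opt$, I would exhibit a feasible solution to the LP relaxation of cost at most $\opt$, built from the optimal tour collection $\OPT$. First preprocess $\OPT$ so that the coverage constraint can be met: if a client lies on several tours, shortcut it out of all but one of them (the triangle inequality only decreases cost), and for each depot credit its coverage to a single tour rooted there (a depot that $\OPT$ does not touch is handled by a degenerate zero-cost tour rooted at itself, or one simply removes unused depots — a minor technicality I return to below). Then, for each tour $T$ with root $r=r_T$ and client set $C_T$, choose a farthest client $v^\star\in C_T$, i.e.\ $c(v^\star,r)=\max_{v\in C_T}c(v,r)$, orient the cyclic order of $T$ starting at $r$, and split it at $v^\star$ into two directed $r$--$v^\star$ paths $P_1,P_2$. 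Set $x^r_{v^\star,e}$ equal to the number of times the directed edge $e$ is used by $P_1$ and $P_2$, set $z^r_{v^\star,v^\star}=1$ and $z^r_{v^\star,u}=1$ for the other clients of $T$ (and for $r$ when $T$ is the tour crediting $r$'s coverage), and set every remaining coordinate to $0$.

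Verifying feasibility is then direct. The two-path structure gives flow value $2$ leaving $r$ and entering $v^\star$ and value $1$ through every other vertex of $T$, so the out- and in-degree constraints hold; the cut constraints $x^r_v(\delta^{in}(S))\ge z^r_{v,u}$ hold because every flow path from $r$ to a covered vertex $u\in S$ with $r\notin S$ must cross into $S$; $\sum_{r,v}z^r_{v,u}=1$ holds by the preprocessing; $z^r_{v,u}\le z^r_{v,v}$ is immediate; $z^r_{v,u}=0$ whenever $c(u,r)>c(v,r)$ because $v=v^\star$ was chosen farthest in $C_T$; and $\sum_{u\in C}z^r_{v,u}=|C_T|\le k=k\cdot z^r_{v,v}$ since $T$ is a feasible tour. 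Since $\sum_e c(e)x^r_{v^\star,e}=c(T)$, summing over all tours gives total cost at most $\opt$.

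For $\LB\le(1-\delta)\opt_{LP}$, note that by the definition of $\delta$ we have $(1-\delta)\opt_{LP}=\sum_{r}\sum_{v}2c(v,r)z^r_{v,v}$, so it suffices to bound $\LB$ by this quantity for any feasible $(x,z)$. Fix a client $u\in C$. Using $\sum_{r,v}z^r_{v,u}=1$, the fact that $z^r_{v,u}>0$ forces $c(u,r)\le c(v,r)$ (the constraint $z^r_{v,u}=0$ when $c(u,r)>c(v,r)$), and $c(u,R)\le c(u,r)$ always, we get $\ell_u=\tfrac{2}{k}c(u,R)=\tfrac{2}{k}c(u,R)\sum_{r,v}z^r_{v,u}\le\tfrac{2}{k}\sum_{r,v}c(v,r)z^r_{v,u}$. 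Summing over $u\in C$, exchanging the order of summation, and applying $\sum_{u\in C}z^r_{v,u}\le k\,z^r_{v,v}$ gives $\LB=\sum_{u\in C}\ell_u\le\tfrac{2}{k}\sum_{r,v}c(v,r)\sum_{u\in C}z^r_{v,u}\le 2\sum_{r,v}c(v,r)z^r_{v,v}=(1-\delta)\opt_{LP}$.

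The only genuine obstacle is the bookkeeping in the first part: making each vertex credited to exactly one tour and pinning down how depots untouched by $\OPT$ are covered, since the flow construction and all the constraint checks are the standard verification for a bidirected-cut relaxation. The second inequality is a one-line computation once the "farthest node'' constraint and the capacity constraint are invoked.
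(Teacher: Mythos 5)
Your proof is correct and follows essentially the same route as the paper: exhibit the natural $\{0,1\}$-solution (splitting each optimal tour at its farthest client into two directed $r$--$v^\star$ paths) for $\opt_{LP}\le\opt$, and chain $c(u,R)\le c(u,r)\le c(v,r)$ whenever $z^r_{v,u}>0$ with the capacity constraint $\sum_u z^r_{v,u}\le k\,z^r_{v,v}$ and the coverage constraint $\sum_{r,v}z^r_{v,u}=1$ for the second bound. The only difference is presentational: the paper dismisses the feasibility check as ``routine to verify'' while you actually carry it out, including the (correct) observation that one needs to preprocess $\OPT$ so each client is credited to a single tour and to handle coverage of depot nodes, which is a genuine (if minor) subtlety given that the LP's coverage constraint ranges over all $u\in V$ rather than just $u\in C$.
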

\begin{proof}
The natural $\{0,1\}$-solution corresponding to the optimum \cvrpmd solution is feasible. That is, each tour can be naturally viewed as 2 units of flow to the farthest node from the depot on that tour. It is routine to verify that all constraints hold under this assignment and that the cost of this $\{0,1\}$ solution is $\opt$.

For the second part of the lemma, consider any $r \in R, v \in C$ and any $u$ supported by the flow $x^r_v$, so $c(u,r) \leq c(v,r)$. Then the LP constraints imply
\[ 
%\sum_{u \in C} z^r_{v,u} \cdot 2c(r,u) \leq \sum_{u \in C} z^r_{v,v} \cdot 2c(r,u) \leq k \cdot 2c(v,R) \cdot z^r_{v,v} 
\sum_{u \in C} z^r_{v,u} \cdot 2c(r,u)\leq k \cdot 2c(v,r) \cdot z^r_{v,v} = (1-\delta) \opt_{LP}.
\]
Summing over all $r \in R, v \in C$ and using $\sum_{r,v} z^r_{v,u} = 1$ for each $u \in C$ completes the proof.
\end{proof}

Finally, we recall the following decomposition by Bang-Jensen, Frank, and Jackson~\cite{BJ95} which we paraphrase in a way that is convenient for us. An (out) $r$-branching in a directed graph is a tree rooted at $r$ that is oriented away from $r$ but does not necessarily span all nodes of the graph. Recall also that an $r$-preflow in a directed graph is an assignment $f$ of values to edges such that $f(\delta^{in}(v)) \geq f(\delta^{out}(v))$ for all nodes $v \neq r$.
\begin{theorem}\label{thm:bj}
Let $G = (V,E)$ be a directed graph. Let $f$ be an $r$-preflow with rational-valued entries for some $r \in V$. Suppose each $v \in V-\{r\}$ has $r\rightarrow v$ connectivity $\lambda_v$ under $f$. For any rational value $K \geq 0$, there are $r$-branchings $B_i$ with associated weights $\mu_i \geq 0$ with $\sum_i \mu_i = K$ such that each $e \in E$ lies on at most an $f_e$-weight of branchings and each $v \in V-\{r\}$ lies on at least a $\min\{K,\lambda_v\}$-weight of branchings.
\end{theorem}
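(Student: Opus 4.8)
The plan is to read Theorem~\ref{thm:bj} as a weighted, capacitated, and possibly non-spanning form of Edmonds' disjoint-branchings theorem, and to reduce it to the integral case. I would first dispose of $K=0$ (take no branchings) and then, using rationality, choose a common denominator $N$ so that $Nf_e\in\mathbb Z_{\ge 0}$ for every edge $e$ and $k:=NK\in\mathbb Z_{>0}$. Replacing each edge $e$ by $Nf_e$ parallel copies produces a multigraph $G'$ in which the $r\to v$ connectivity is exactly $N\lambda_v$ for each $v\neq r$, since every cut simply scales by $N$. The task then becomes purely combinatorial: find $k$ pairwise edge-disjoint $r$-branchings $B_1,\dots,B_k$ in $G'$ such that every $v\neq r$ lies on at least $\min\{k,N\lambda_v\}$ of them. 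Assigning weight $\mu_i=1/N$ to the image in $G$ of each $B_i$ then recovers the statement: the weights sum to $K$; an edge $e$ has only $Nf_e$ copies, so it carries total weight at most $f_e$; and each vertex carries weight at least $\min\{K,\lambda_v\}$. (The preflow hypothesis is what makes the statement read naturally in the setting of \cite{BJ95}; I do not expect to use it in the argument sketched here.)

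To prove the combinatorial claim I would delete all vertices unreachable from $r$ (connectivity $0$, required on no branching) and then induct on $k$, peeling one spanning branching at a time. The engine is the classical lemma underlying Edmonds' theorem: \emph{in any digraph in which every vertex is reachable from $r$, there is a spanning $r$-branching $B$ whose deletion lowers each connectivity $\lambda_v$ by at most $1$}. Granting this, the base case $k=1$ is just ``take a spanning $r$-branching of the reachable part'', and for the inductive step I would take $B_k$ to be such a branching in $G'$, recurse on the reachable part of $G'-B_k$ with parameter $k-1$ to obtain edge-disjoint $B_1,\dots,B_{k-1}$, and check that every $v$ then lies on at least $\min\{k,N\lambda_v\}$ of $B_1,\dots,B_k$: it lies on $B_k$ whenever it is reachable from $r$ (i.e.\ whenever $N\lambda_v\ge 1$), and the recursion, together with $\lambda^{G'-B_k}_v\ge N\lambda_v-1$, accounts for any further copies it needs, while vertices with $N\lambda_v=0$ need no coverage.

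The one genuinely non-trivial ingredient, and the step I expect to be the main obstacle, is the one-branching lemma itself. I would prove it by the standard uncrossing argument: grow $B$ edge by edge from $r$, maintaining the invariant that $B$ never has two edges entering a set $X\subseteq V\setminus\{r\}$ that is \emph{tight}, i.e.\ a minimum $r$--$u$ cut for some $u\in X$; submodularity of the in-degree function lets one uncross crossing tight sets, which shows that as long as $B$ does not yet span there is an edge leaving the current vertex set of $B$ that can be added without breaking the invariant. Once $B$ spans, the invariant gives $|B\cap\delta^{in}(X)|\le 1$ for every tight $X$, hence $\lambda^{D\setminus B}_v\ge\lambda^D_v-1$ for all $v$. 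The remaining pieces --- the scaling, the reachability reduction, and the peeling induction --- are routine bookkeeping.
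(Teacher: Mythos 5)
The paper does not prove Theorem~\ref{thm:bj}: it is quoted (with rephrasing) from Bang-Jensen, Frank, and Jackson~\cite{BJ95}, with Post and Swamy~\cite{PS15} cited for polynomial-time constructibility, so there is no in-paper proof to compare against. Your outer reduction (dispose of $K=0$, clear denominators, pass to the $Nf_e$-copy multigraph, peel integral branchings) is a reasonable route, but the engine you invoke --- that in any digraph in which every vertex is reachable from $r$ there is a \emph{spanning} $r$-branching $B$ with $\lambda_{D-B}(v)\ge\lambda_D(v)-1$ for every $v$ --- is false. Take $V=\{r,a,b,c\}$ with arcs $r\to a$, $r\to b$, $a\to c$, $b\to c$, each with $f$-value $1$ (this is even an $r$-preflow). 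Then $\lambda(a)=\lambda(b)=1$ and $\lambda(c)=2$, but every spanning $r$-arborescence must contain both $r\to a$ and $r\to b$ since those are the only arcs entering $a$ and $b$; deleting such a $B$ disconnects $a$ and $b$ from $r$, so $\lambda_{D-B}(c)=0$, not $\ge 1$. Your uncrossing invariant also cannot be maintained: $\{a\}$, $\{b\}$, and $\{a,b,c\}$ are all tight, so $B$ may place at most one arc into $\{a,b,c\}$, while spanning forces two ($r\to a$ and $r\to b$). In effect you extrapolated the Lov\'asz lemma, which is stated under the uniform hypothesis $\lambda(v)\ge k$ for all $v$, to nonuniform connectivities while retaining the spanning requirement; that extrapolation is invalid.

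The reason Theorem~\ref{thm:bj} speaks of branchings rather than arborescences is precisely that the pieces must be allowed not to span. In the example above, a valid decomposition for $K=2$ is $B_1=\{r\to a,a\to c\}$, $B_2=\{r\to b,b\to c\}$, and neither of these spans. So the one-step lemma you actually need must peel off a \emph{non-spanning} branching that covers only the vertices with positive residual demand $\min\{k,\lambda(v)\}$ and that controls residual connectivity against that demand rather than against $\lambda_D(v)$; the bookkeeping step ``$v$ lies on $B_k$ whenever it is reachable'' must change accordingly, since a reachable vertex need not (and sometimes cannot) be covered by every branching. Your aside that the preflow hypothesis is dispensable is beside the point here: the counterexample is itself a preflow, so the failure lies in the lemma you asserted, not in an assumption you dropped.
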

Post and Swamy~\cite{PS15} showed how to find the branchings $\{B_i\}$ and corresponding weights $\{\mu_i\}$ in polynomial time. In particular, the decomposition involves only a polynomial number of branchings.

%%%%%%%%%%%%%%%%%%%%%%%%%%%%%%%%%%%%%%%%%%%%%%%%%%%%%%%%%%%%%%%%%%%%%%%%

\subsection{Step: Preflow Rounding}

Consider some constant $0 < \gamma \leq 1/2$. We will eventually fix $\gamma := 0.46821$ to optimize our analysis. 
We use the following procedure to sample a collection of $R$-rooted paths from an optimal extreme point (hence rational) solution $(x,z)$ to \eqref{lp:cvrp}.
\begin{itemize}
\item For each $r \in R, v \in C$ consider the $r$-$v$ (pre)flow given by values $\gamma \cdot x^r_{v,e}$, which is sending $2\gamma \cdot z^r_{v,v}$ units of flow from $r$ to $v$. 
Using $K_{r,v} := 2\gamma \cdot z^r_{v,v}$ in Theorem \ref{thm:bj}, we obtain a collection of branchings $\{B_i\}_{i \ge 1}$ with corresponding nonnegative weights $\mu_i$ summing to $K_{r,v}$ such that each directed edges lies on at most a $\gamma \cdot x^r_{v,e}$-fraction of branchings. 
Furthermore, $v$ lies on each of these branchings and every other $u \in C-\{v\}$ lies on at least a $\gamma \cdot z^r_{v,u}$-fraction of these branchings (noting $\gamma \cdot z^r_{v,u} \leq K_{r,v}/2$ by the LP constraints).
Let $\mathcal B_{r,v}$ be the subset of these branchings $\{B_i\}_{i \ge 1}$ obtained by independently sampling each one to lie in $\mathcal B_{r,v}$ with probability $\mu_i$ which is at most $1$ since $\mu_i \leq K_{r,v}$ and $\gamma \leq 1/2$.
\item Next, for each $r \in R$ and $v \in C$ we turn each branching $B \in \mathcal B_{r,v}$ into an $r$-$v$ path by adding the reverse direction of all edges not on the $r$-$v$ path in $B$ and shortcutting the resulting Eulerian walk. Call the resulting path $P(B)$.
\item Finally, we simplify the resulting collection of paths as follows. While there is some $v \in C$ that lies on at least two paths $P(B), P(B')$ remove the occurrence of $v$ from one of these paths, choosing arbitrarily, by either shortcutting the path if $v$ was not the last node on the path or truncating the path to end at the node just before $v$ if $v$ was the last node on the path. Discard any paths that no longer contain any clients.
\end{itemize}
For $r \in R$ let $\mathcal P_r$ be the set of the resulting paths that start at $r$ and let $\mathcal P = \cup_{r \in R} \mathcal P_r$ be the set of all such paths.

The resulting collection of paths have the following properties: a) each path $P \in \mathcal P_r$ begins at $r$ and, otherwise, only contains clients, b) each path in $\mathcal P$ visits at least one client, c) no client lies on more than one path in $\mathcal P$ (though multiple paths may originate from the same depot), d) a client lies on a path in $\mathcal P$ if and only if it lies on at least one sampled branching.

\begin{lemma}\label{lem:path}
$\E[\sum_{P \in \mathcal P} c(P)] \leq \gamma \cdot (1+\delta) \cdot \opt_{LP}$
\end{lemma}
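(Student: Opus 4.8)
The plan is to bound the expected cost of $\mathcal P$ by comparing it, path-by-path, against the cost contributions of the LP-scaled preflows before any shortcutting or simplification is done. First I would observe that the final simplification step (removing duplicate occurrences of clients across paths, and discarding empty paths) only shortcuts or truncates paths, so by the triangle inequality it can only decrease total cost. Hence it suffices to bound $\E\left[\sum_{r,v}\sum_{B \in \mathcal B_{r,v}} c(P(B))\right]$.

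Next I would fix a depot $r$, a client $v$, and analyze a single branching $B$ in the Bang-Jensen--Frank--Jackson decomposition for the preflow $\gamma \cdot x^r_v$. When we turn $B$ into the path $P(B)$ we add the reverse of every edge of $B$ not on the (unique) $r$-$v$ path in $B$ and shortcut; each such doubled/reversed edge appears at most twice, and shortcutting only helps, so $c(P(B)) \le 2 c(B) = 2 \sum_{e \in B} c(e)$ where here $c(e)$ refers to the undirected cost of the edge underlying the directed edge $e$. Actually, to get the tight constant $\gamma$ rather than $2\gamma$, I expect one needs to be a bit more careful: the edges on the $r$-$v$ path of $B$ are used only once, and one should charge them against the ``first'' unit of flow while the doubled edges are charged against both units — but the cleanest route is to note that $P(B)$, as an $r$-$v$ walk using each edge of $B$ at most twice and each $r$-$v$ path edge at most once, has cost at most $c(B) + c(B\setminus P_{r,v}(B)) \le 2c(B)$, and then rely on the fact that the total $f$-weight available is what controls the sum. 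Summing over $B$ in the decomposition with weights $\mu_i$: by Theorem~\ref{thm:bj} each directed edge $e$ lies on at most a $\gamma x^r_{v,e}$-weight of branchings, so $\sum_i \mu_i\, c(B_i) \le \sum_e c(e)\, \gamma x^r_{v,e} = \gamma \cdot (\text{cost of } x^r_v)$.

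Then I would pass from the weighted decomposition to the sampled set $\mathcal B_{r,v}$: since each $B_i$ is included independently with probability $\mu_i$, linearity of expectation gives $\E\left[\sum_{B \in \mathcal B_{r,v}} c(P(B))\right] \le \sum_i \mu_i\, c(P(B_i)) \le 2\gamma \cdot \sum_e c(e)\, x^r_{v,e}$ — wait, that produces a factor $2\gamma$, not $\gamma$. The resolution, and the step I expect to be the main obstacle, is exactly the accounting that recovers the factor $\gamma(1+\delta)$ instead of $2\gamma$: one splits $c(P(B))$ into (i) the cost of the $r$-$v$ path inside $B$ and (ii) the cost of the remaining doubled edges, bounds the doubled part by $\gamma \cdot (\text{cost of } x^r_v \text{ minus the } 2c(v,r) \text{ ``core''})$ and the path part by $\gamma \cdot 2 c(v,r) \cdot z^r_{v,v}$ appropriately, and then uses the definition $\sum_{r,v} 2 c(v,r) z^r_{v,v} = (1-\delta)\opt_{LP}$ together with $\sum_{r,v}(\text{cost of } x^r_v) = \opt_{LP}$. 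Concretely, writing $D := \opt_{LP}$ for the total LP cost and $(1-\delta)D$ for the ``core'' part $\sum 2c(v,r)z^r_{v,v}$, the doubled edges contribute at most $2\gamma\cdot(D - (1-\delta)D) = 2\gamma\delta D$ and the single-use path edges contribute at most $\gamma D$, for a total of $\gamma(1+2\delta)D$; to hit $\gamma(1+\delta)D$ one must additionally exploit that the ``core'' $2c(v,r)$ portion of each $x^r_v$-flow is itself partly single-use. I would set up the per-edge charging so that each directed edge on the $v$-to-$r$ return half of the flow is charged once and each on the $r$-to-$v$ forward half is charged the residual amount, making the total exactly $\gamma \cdot D + \gamma \cdot \delta D = \gamma(1+\delta)\opt_{LP}$ after summing. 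Finally, combine with $\E$-linearity over all $(r,v)$ and the monotonicity of the simplification step to conclude. The delicate point throughout is the bookkeeping that turns the naive ``doubling costs a factor $2$'' into the claimed $\gamma(1+\delta)$, using that the flow $x^r_v$ already ``pays for'' its $2c(v,r)$ core exactly once in a favorable direction.
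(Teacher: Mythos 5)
You have the right basic idea — namely, that the Eulerian walk underlying $P(B)$ uses each edge on the $r$--$v$ path of $B$ only once while all other edges of $B$ are doubled, so $c(P(B)) \le 2c(B) - c(P_{r,v}(B))$ (you write this as $c(B) + c(B\setminus P_{r,v}(B))$). But you never complete this into a correct calculation, and the attempts you do make are either wrong or too vague to check. Your ``core vs.\ excess'' accounting arrives at $\gamma(1+2\delta)\opt_{LP}$, which you acknowledge is not what's claimed; and the proposed repair (``charge the $v$-to-$r$ return half once, the $r$-to-$v$ forward half the residual amount'') does not correspond to anything concrete in the setup — the preflow $x^r_v$ routes $2z^r_{v,v}$ units of flow all from $r$ to $v$, there is no ``return half,'' and the branching-to-path conversion doubles edges off the $r$-$v$ path of the branching $B$, not any prescribed subset of edges supporting $x^r_v$.

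The missing step is short and does not require any ``core/excess'' decomposition of $\opt_{LP}$. From $c(P(B)) \le 2c(B) - c(P_{r,v}(B))$ use the triangle inequality to get $c(P(B)) \le 2c(B) - c(v,r)$. Now, (i) by Theorem~\ref{thm:bj} each directed edge $e$ carries at most a $\gamma x^r_{v,e}$-weight of branchings, so $\sum_i \mu_i\, 2c(B_i) \le 2\gamma \sum_e c(e) x^r_{v,e}$, which after summing over all $(r,v)$ is $2\gamma\,\opt_{LP}$; and (ii) $\sum_i \mu_i\, c(v,r) = K_{r,v}\, c(v,r) = 2\gamma z^r_{v,v}\, c(v,r)$, which after summing over all $(r,v)$ equals $\gamma\sum_{r,v} 2c(v,r) z^r_{v,v} = \gamma(1-\delta)\opt_{LP}$ by the definition of $\delta$. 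Subtracting (ii) from (i) gives $2\gamma\opt_{LP} - \gamma(1-\delta)\opt_{LP} = \gamma(1+\delta)\opt_{LP}$. The point you were circling but never used cleanly is that the weights in the branching decomposition sum to $K_{r,v} = 2\gamma z^r_{v,v}$, so the $c(v,r)$ term aggregates to exactly the $(1-\delta)\opt_{LP}$ quantity defining $\delta$ — there is no need to partition the LP cost into edge classes or split the flow into directions.
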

\begin{proof}
For a branching $B$ we have $c(P(B)) \leq 2 \cdot c(B) - c(v,r)$ because the path was obtained by doubling all edges {\em not} on the $r$-$v$ path in $B$ and then shortcutting the resulting Eulerian walk. So the expected cost of $\mathcal P$ is at most
\begin{eqnarray*}
    & & \sum_{r \in R} \sum_{v \in C} \sum_{B_i \in \mathcal B_{r,v}} \mu_i \cdot (2 \cdot c(B) - c(v,r)) \\
    & \leq & 2\cdot \gamma \cdot \opt_{LP} - \sum_{r \in R} \sum_{v \in C} c(v,r) \sum_{B_i \in \mathcal B_{r,v}} \mu_i \\
    & = & 2\cdot\gamma \cdot \opt_{LP} - \sum_{r \in R} \sum_{v \in C} K_{r,v} \cdot c(v,r)
\end{eqnarray*}
where the inequality is because for each $r \in R$ and $v \in C$ we have each edge $e$ lying on at most a $\gamma \cdot x^r_{v,e}$-fraction of branchings in $\mathcal B_{r,v}$.

Recalling $K_{r,v} = 2\gamma \cdot z^r_{v,v}$, the last term $\sum_r \sum_v K_{r,v} \cdot c(v,r)$ equals $\gamma \cdot \sum_{r,v} 2 c(r,v) \cdot z^r_{v,v} = (1-\delta) \cdot \opt_{LP}$ so the expected cost of $\mathcal P$ is seen to be at most
\[ 2\cdot \gamma \cdot \opt_{LP} - \gamma \cdot (1-\delta) \cdot \opt_{LP} = \gamma \cdot (1+\delta) \cdot \opt_{LP}.\]
\end{proof}

%%%%%%%%%%%%%%%%%%%%%%%%%%%%%%%%%%%%%%%%%%%%%%%%%%%%%%%%%%%%%%%%%%%%%%%%

\subsection{Step: Covering the Uncovered Nodes with Trees Rooted in $\mathcal P$}

Recall that $\mathcal P$ is the set of all $R$-rooted paths sampled in the previous step. Let $U$ be all clients that do not lie on a path, this is the same as the set of clients that did not lie on any branching that was sampled.

The next step is to compute a cheapest spanning forest $F$ rooted at nodes in $R \cup (C-U)$. That is, each component of $F$ includes precisely one node in $R \cup (C-U)$. Intuitively, the trees of $F$ are being used to graft in uncovered clients into the paths.
Before converting $\mathcal P$ and $F$ into a feasible \cvrpmd solution, we note a few properties of $U$ and $F$.

\begin{lemma}\label{lem:penalty}
$\E[\ell(U)] \leq e^{-\gamma} \cdot \LB$
\end{lemma}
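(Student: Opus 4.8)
The plan is to bound, for each client $u \in C$, the probability that $u$ lies in $U$ (i.e.\ that $u$ is covered by no sampled branching), and then multiply by $\ell_u$ and sum. Fix a client $u$. For every pair $(r,v)$ with $r \in R$ and $v \in C$, the Bang-Jensen--Frank--Jackson decomposition applied with $K_{r,v} = 2\gamma z^r_{v,v}$ gives branchings $\{B_i\}$ with weights $\mu_i$ summing to $K_{r,v}$, and the total $\mu$-weight of branchings containing $u$ is at least $\min\{K_{r,v}, \gamma z^r_{v,u}\} = \gamma z^r_{v,u}$ (the equality using the LP constraint $z^r_{v,u} \le z^r_{v,v}$, which gives $\gamma z^r_{v,u} \le K_{r,v}/2 \le K_{r,v}$). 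Each such branching is placed in $\mathcal B_{r,v}$ independently with probability $\mu_i$, so the probability that \emph{none} of the branchings from this particular $(r,v)$-decomposition covering $u$ is sampled is $\prod_{i : u \in B_i} (1-\mu_i) \le \exp\!\big(-\sum_{i : u \in B_i} \mu_i\big) \le \exp(-\gamma z^r_{v,u})$, using $1-x \le e^{-x}$.

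Next I would use independence across the different $(r,v)$ pairs: the sampling of branchings for distinct $(r,v)$ is done independently, so the probability that $u$ is covered by no sampled branching at all is at most
\[
\prod_{r \in R}\prod_{v \in C} \exp(-\gamma z^r_{v,u}) = \exp\!\Big(-\gamma \sum_{r \in R}\sum_{v \in C} z^r_{v,u}\Big) = \exp(-\gamma),
\]
where the last equality is the LP covering constraint $\sum_{r,v} z^r_{v,u} = 1$. Since $u \in U$ exactly when $u$ lies on no sampled branching (property (d) of the path collection), we get $\Pr[u \in U] \le e^{-\gamma}$. By linearity of expectation, $\E[\ell(U)] = \sum_{u \in C} \ell_u \Pr[u \in U] \le e^{-\gamma} \sum_{u \in C} \ell_u = e^{-\gamma} \cdot \LB$.

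The only delicate point, and the step I would be most careful about, is making sure the $\le \exp(-\gamma)$ bound genuinely uses \emph{all} the mass of the constraint $\sum_{r,v} z^r_{v,u} = 1$: the branchings coming from distinct pairs $(r,v)$ are sampled using genuinely independent coin flips, so the failure events multiply, and each pair contributes a factor $\exp(-\gamma z^r_{v,u})$ with the exponents summing to exactly $\gamma$. One should also note that the later ``simplification'' step that deletes a client from a path when it appears on two paths only ever \emph{removes} occurrences, so it cannot move a client into $U$ that was covered by some sampled branching; thus property (d) is exactly what is needed and the bound is unaffected by that cleanup. Everything else is the routine $1-x\le e^{-x}$ estimate and linearity of expectation.
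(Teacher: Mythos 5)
Your proof is correct and follows essentially the same approach as the paper: both compute, for each client $u$, that the total sampling probability mass of branchings covering $u$ sums to $\gamma$ (via the LP constraint $\sum_{r,v} z^r_{v,u} = 1$ and the $\min\{K_{r,v},\gamma z^r_{v,u}\}$ coverage guarantee from Theorem~\ref{thm:bj}), then use independence of the samples to bound the uncovered probability by $e^{-\gamma}$. The paper invokes the AM--GM inequality to get there while you use $1-x\le e^{-x}$; these are interchangeable routine estimates yielding the same bound.
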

\begin{proof}
For each $r \in R, v \in V$ if we view each branching $B_i$ obtained from Theorem \ref{thm:bj} as a set covering its clients, then each client $u \in C$ lies on a total weight of $\gamma \cdot z^r_{v,u}$ of such sets. Summing over all $r,v$ we see that each $u \in C$ lies on a total weight of $\sum_{r,v} \gamma z^r_{v,u} = \gamma$ such sets. Since the branchings were sampled independently, the usual analysis for randomized-rounding set cover algorithms using the arithmetic-geometric mean inequality shows that a node $u \in C$ is not covered by any sampled branching with probability at most $e^{-\gamma}$.
\end{proof}

To bound the expected cost of $F$, we use the following adaptation of the Bridge Lemma of \cite{steiner} to vehicle routing problems.
\begin{theorem}[Böhm et al.~\cite{BFMS25}]\label{thm:bridge}
Let $G = (V,E)$ be an undirected graph with edge costs $c$. For any $S \subseteq V$, let $c_S$ be the minimum-cost of a spanning forest whose trees are rooted in $S$.
Let $T \subseteq V$ be nonempty and let $\mu : 2^V \rightarrow \mathbb{Q}_{\geq 0}$ be a probability distribution over subsets of $V \setminus T$. Also suppose there is some $\gamma \geq 0$ such that for that any $v \in V\setminus T$ we have ${\bf Pr}_{S \sim \mu}[v \notin S] \leq \gamma$. Then $\E[c_{T \cup S}] \leq \gamma \cdot c_T$.
\end{theorem}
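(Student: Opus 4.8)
The plan is to prove this by an exchange argument against a fixed optimal forest for the root set $T$, recovering the flavour of the classical bridge lemma. First I would let $F_T$ be a minimum-cost spanning forest whose trees are rooted in $T$; if no finite-cost such forest exists (i.e.\ $c_T = \infty$, which happens only if some component of $G$ avoids $T$), the inequality is vacuous, so assume $F_T$ spans $V$ with $c(F_T) = c_T$. Orient each tree of $F_T$ toward its root in $T$. Then every vertex $v \in V \setminus T$ has a well-defined parent edge $e_v$ (the first edge on its path to the root), distinct vertices have distinct parent edges, and $c_T = \sum_{v \in V \setminus T} c(e_v)$.

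Next, fix a realization $S$ of the random set and define the edge set $F'_S := F_T \setminus \{\, e_v : v \in S \setminus T \,\}$. The key structural claim I would establish is that $F'_S$ is a spanning forest of $V$ each of whose connected components contains exactly one vertex of $T \cup S$; this certifies $c_{T \cup S} \le c(F'_S) = c_T - \sum_{v \in S \setminus T} c(e_v)$. To prove the claim, observe that since $F'_S \subseteq F_T$, every component $K$ of $F'_S$ is a connected subtree of a single tree of $F_T$; let $w$ be the unique vertex of $K$ closest (in edge count) to that tree's $T$-root. Either $w$ is the root itself, so $w \in T$, or the parent edge $e_w$ was removed, so $w \in S \setminus T$; in both cases $w \in T \cup S$. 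For uniqueness, any further vertex $w' \in K \cap (T \cup S)$ with $w' \ne w$ would, on the unique $F_T$-path from $w'$ up to $w$ (which lies inside the subtree $K$), have a parent edge $e_{w'}$ belonging to $K$; but $w' \in T$ is impossible since then $w'$ is a second $T$-root in one $F_T$-tree, and $w' \in S \setminus T$ is impossible since then $e_{w'}$ was deleted and cannot lie in $F'_S$.

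Finally I would take expectations over $S \sim \mu$. By linearity and the structural claim, $\E[c_{T \cup S}] \le c_T - \sum_{v \in V \setminus T} \Pr_{S \sim \mu}[v \in S] \cdot c(e_v)$. The hypothesis $\Pr_{S \sim \mu}[v \notin S] \le \gamma$ gives $\Pr_{S \sim \mu}[v \in S] \ge 1 - \gamma$ for every $v \in V \setminus T$, hence $\E[c_{T \cup S}] \le c_T - (1-\gamma) \sum_{v \in V \setminus T} c(e_v) = c_T - (1-\gamma)\, c_T = \gamma \cdot c_T$, as desired. (Note $S \subseteq V \setminus T$, so $S \setminus T = S$, which I keep written as $S \setminus T$ only for emphasis.)

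The only delicate point is the structural claim that removing the parent edges of the sampled vertices from $F_T$ leaves a forest correctly rooted in $T \cup S$; once that is in hand, the remainder is a one-line expectation computation. I expect no genuine obstacle beyond careful ``topmost vertex of each component'' bookkeeping, in particular when several sampled vertices lie on a common root-to-leaf path in $F_T$.
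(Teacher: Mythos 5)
The paper cites Theorem~\ref{thm:bridge} from B\"ohm et al.~\cite{BFMS25} as a black box and gives no proof of its own, so there is no in-paper argument to compare against. Your proof is correct and self-contained, and it follows what is essentially the canonical route for a statement of this shape: fix a minimum $T$-rooted forest $F_T$, orient each tree toward its unique $T$-root, observe that the parent edges $e_v$ biject with $V\setminus T$ and sum to $c_T$, delete the parent edges of the sampled vertices to exhibit a feasible $(T\cup S)$-rooted spanning forest of cost $c_T - \sum_{v\in S} c(e_v)$, and then average using $\Pr[v\in S]\geq 1-\gamma$. The structural claim is the only nontrivial step and you argue it correctly; one could phrase the $w'\in T$ sub-case slightly more precisely as: the $F_T$-tree containing $K$ has a unique $T$-vertex, namely its root, so $w'$ would have to be that root and hence the topmost vertex of $K$, forcing $w'=w$, a contradiction (your ``second $T$-root'' wording tacitly assumes $w\in T$, but the conclusion is the same once you note this). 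The counting also confirms feasibility cleanly: deleting $|S|$ edges from a forest with $|T|$ components yields exactly $|T|+|S|$ components, each with a distinct topmost vertex in $T\cup S$, matching $|T\cup S|$. The handling of $c_T=\infty$ is a harmless aside (and irrelevant in the paper's application, where $G$ is a complete metric graph). No gaps.
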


Using this, we bound the expected cost of $F$ as follows.
\begin{lemma}\label{lem:bridge}
$\E[c(F)] \leq e^{-\gamma} \cdot \opt_{LP}$
\end{lemma}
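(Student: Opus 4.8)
The plan is to apply the Bridge Lemma (Theorem~\ref{thm:bridge}) with the ground set $V = C \cup R$, the ``terminal'' set $T := R$, and the random set $S$ being $C - U$, the set of clients covered by the sampled paths $\mathcal P$. By definition, $F$ is exactly a cheapest spanning forest rooted at $R \cup (C - U) = T \cup S$, so in the notation of Theorem~\ref{thm:bridge} we have $c(F) = c_{T \cup S}$. The distribution of $S$ is a probability distribution over subsets of $V \setminus T = C$ (indeed $U \subseteq C$, since depots are always ``covered''), so this is a legitimate instantiation. To invoke the theorem I need a uniform bound on $\mathbf{Pr}[v \notin S] = \mathbf{Pr}[v \in U]$ over all $v \in C$; the proof of Lemma~\ref{lem:penalty} already establishes the pointwise bound $\mathbf{Pr}[v \in U] \le e^{-\gamma}$ for every client $v$ (that is precisely what the set-cover/AM--GM argument there gives before summing). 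Hence Theorem~\ref{thm:bridge} with parameter $e^{-\gamma}$ yields $\E[c(F)] \le e^{-\gamma} \cdot c_R$, where $c_R$ is the minimum cost of a spanning forest of $C \cup R$ rooted at $R$.

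It then remains to argue $c_R \le \opt_{LP}$, i.e.\ that the cheapest $R$-rooted spanning forest of $C \cup R$ costs at most the LP optimum. One clean way: the optimal \cvrpmd solution, viewed as a multiset of edges, connects every client to some depot, so it contains an $R$-rooted spanning forest as a subgraph (after deleting edges and shortcutting duplicates), giving $c_R \le \opt$; combined with $\opt_{LP} \le \opt$ this is not quite what we want, since we need $c_R \le \opt_{LP}$, not $c_R \le \opt$. To get the sharper bound, I would instead exhibit a feasible fractional solution to the bidirected-cut (or undirected cut) relaxation for $R$-rooted spanning forest directly from $(x,z)$: aggregate the flows, setting for each undirected edge $e$ a value proportional to $\tfrac{1}{2}\sum_{r,v}\big(x^r_{v,e} + x^r_{v,\bar e}\big)$ suitably normalized. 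The constraint $\sum_{r}\sum_v z^r_{v,u} = 1$ for each $u \in C$ guarantees one full unit of flow reaches $u$ from $R$, so every client-containing set $S \subseteq V - R$ receives at least one unit of incoming flow; by the integrality of the bidirected-cut relaxation for rooted forests this fractional solution is a convex combination of $R$-rooted spanning forests, the cheapest of which costs at most its (fractional) value, which is at most $\sum_{r,v,e} c(e) x^r_{v,e} = \opt_{LP}$. Therefore $c_R \le \opt_{LP}$, and chaining with the previous paragraph gives $\E[c(F)] \le e^{-\gamma}\cdot \opt_{LP}$.

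I expect the main obstacle to be the second paragraph: cleanly showing $c_R \le \opt_{LP}$ rather than the weaker $c_R \le \opt$. The subtlety is that $F$'s cost must be charged against the LP value, not the integral optimum, because Lemma~\ref{lem:bridge} is stated with $\opt_{LP}$ on the right-hand side and later steps of the analysis need that extra tightness. The natural route is the well-known fact that the bidirected-cut relaxation for the minimum-cost $r$-rooted (out-)arborescence / $R$-rooted branching problem is integral (it describes exactly the dominant of the spanning-forest polytope), so any feasible fractional point has cost at least that of the cheapest integral forest; the work is just to verify that the appropriate projection of $(x,z)$ is feasible for that relaxation, which follows from the flow-conservation and cut constraints of \eqref{lp:cvrp} together with $\sum_{r,v} z^r_{v,u} = 1$. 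The rest (setting $T = R$, reading off the pointwise probability bound from the proof of Lemma~\ref{lem:penalty}, and applying Theorem~\ref{thm:bridge}) is routine bookkeeping.
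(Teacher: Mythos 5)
Your plan follows the paper's proof essentially step for step: set $T := R$, let $S := C - U$, read the pointwise bound $\Pr[v \notin S] \le e^{-\gamma}$ off the proof of Lemma~\ref{lem:penalty}, apply Theorem~\ref{thm:bridge} to get $\E[c(F)] \le e^{-\gamma} \cdot c_R$, and then argue $c_R \le \opt_{LP}$ by exhibiting a feasible point in an integral cut relaxation for $R$-rooted spanning forests whose cost is $\opt_{LP}$. The only wrinkle is the construction of that feasible point. The aggregation you write, $\tfrac{1}{2}\sum_{r,v}\bigl(x^r_{v,e} + x^r_{v,\bar e}\bigr)$ per \emph{undirected} edge, does not obviously satisfy the cut constraints: \eqref{lp:cvrp} only guarantees $x^r_v(\delta^{in}(S)) \ge z^r_{v,u}$, with no analogous lower bound on $x^r_v(\delta^{out}(S))$, so after halving you would only get $y(\delta(S)) \ge 1/2$ in general, and if ``suitably normalized'' means scaling back up by $2$ to restore feasibility, then the cost bound is the same as if you had never halved. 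The paper sidesteps this entirely by staying in the directed world: take $y := \sum_{r \in R}\sum_{v \in C} x^r_v$ as a vector on \emph{directed} edges, with no halving or direction-averaging. For any nonempty $S$ disjoint from $R$ and any $u \in S$, the constraints $x^r_v(\delta^{in}(S)) \ge z^r_{v,u}$ summed over $r,v$ together with $\sum_{r,v} z^r_{v,u} = 1$ give $y(\delta^{in}(S)) \ge 1$ directly, and $\sum_e c(e)\, y_e = \opt_{LP}$ with no factor to chase. Integrality of the $R$-rooted directed cut LP (Edmonds) then gives $c_R \le \opt_{LP}$. So the route is the same; just drop the $1/2$ and work with the directed aggregate $\sum_{r,v} x^r_v$.
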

\begin{proof}
In Theorem \ref{thm:bridge}, the graph is the undirected metric over $R \cup V$, $T := R$, and the probability distribution $\mu$ is given by the subset of clients that are covered by a path in $\mathcal P$. From the analysis of Lemma \ref{lem:penalty}, we know for each client $v$ that ${\bf Pr}_{S \sim \mu}[v \notin S] \leq e^{-\gamma}$. So by Theorem \ref{thm:bridge}, ${\bf E}[c(F)] \leq e^{-\gamma} \cdot c_R$.

We conclude by showing $c_R \leq \opt_{LP}$. To do that, consider the following natural cut-based  LP relaxation for the minimum-cost $R$-rooted directed spanning forest in the bidirected graph obtained from our given metric over $R \cup V$.
\[ \min\left\{\sum_e c(e) \cdot y_e : y(\delta^{in}(S)) \geq 1 ~\forall~ \emptyset \subsetneq S \subseteq V, y \geq 0\right\}. \]
Here, we are letting $y_e$ be a variable for each directed edge and $c(e)$, for a directed edge is the same as the cost of the underlying undirected edge.

It is well known that that extreme points of this relaxation are integer, e.g., by contracting $R$ to a single node an using integrality of the corresponding directed arborescence LP relaxation \cite{Edm67}.
It is also easy to verify that the vector $\sum_{r \in R, v \in C} x^r_v$ is feasible for this LP relaxation. So the cheapest $R$-rooted spanning forest has cost at most $\opt_{LP}$. That is, $c_R \leq \opt_{LP}$.
\end{proof}

Throughout the rest of the algorithm, we remove clients from the set $U$ as they are covered. In the description of the algorithm, we will say a client $u \in C$ is {\em uncovered} if $u \in U$ when that step is executed.

%%%%%%%%%%%%%%%%%%%%%%%%%%%%%%%%%%%%%%%%%%%%%%%%%%%%%%%%%%%%%%%%%%%%%%%%

\subsection{Algorithm Summary}\label{sec:summary}

The entire algorithm for vehicle capacity $k \geq 3$ is summarized in Algorithm \ref{alg:full}. Here, we use $\gamma = 0.46821$. The two subroutines mentioned in Algorithm \ref{alg:full} will be covered in the sections below.

\begin{algorithm}
\caption{Multiple-Depot CVRP Algorithm}
\begin{algorithmic}\label{alg:full}
\STATE Let $(x,z)$ be an extreme point optimal solution to \eqref{lp:cvrp}.
\STATE $\mathcal P \gets \emptyset$
\FOR{each $r \in R, v \in C$}
\STATE Apply Theorem \ref{thm:bj} to preflow $\gamma \cdot x^r_v$ using $K := 2\gamma \cdot z^r_{v,v}$ to get branchings $\mathcal B_{r,v} = \{B_i\}$ with corresponding weights $\{\mu_i\}$. \COMMENT{Recall $v$ will be on each such $B_i$.}
\FOR{each $B_i \in \mathcal B_{r,v}$}
\STATE Let $P_i$ be an $r-v$ path obtained by doubling edges of $B_i$ not on the $r-v$ path in $B_i$ and shortcutting the resulting Eulerian $r-v$ walk past repeated clients.
\STATE With probability $\mu_i$ add $P_i$ to $\mathcal P$.
\ENDFOR
\ENDFOR
\STATE $U \gets$ clients in $C$ not lying on any path in $\mathcal P$.
\STATE Shortcut paths in $\mathcal P$ so each $v \in C-U$ lies on exactly one path.
\STATE $F \gets$ min-cost spanning forest of $R \cup C$ rooted in $R \cup (C-U)$.
\FOR{each tree $T$ of $F$}
\STATE Apply Algorithm \ref{alg:pruning} in Section \ref{sec:pruning} to cover some clients of $T$ with tours and prune it to a shorter tree $T'$.
\ENDFOR
\FOR{each path $P \in \mathcal P$}
\STATE Apply the tour partitioning procedure in Section \ref{sec:final_part} to $P$ and the trees $T'$ rooted in $P$ to cover the remaining clients.
\ENDFOR
\end{algorithmic}
\end{algorithm}

%%%%%%%%%%%%%%%%%%%%%%%%%%%%%%%%%%%%%%%%%%%%%%%%%%%%%%%%%%%%%%%%%%%%%%%%%%%%%%%%%%%%%%%%%%%%%%%%%%%%%%%%%%%%%%%%%%%%%%%%%%%%%%%%%%%%%%%
\subsection{Step: Tree Pruning}\label{sec:pruning}

Recall $F$ is a forest where each tree has a single node that is covered by the paths in $\mathcal P$. That is, each tree can be viewed as rooted in $R \cup (C-U)$. All other nodes in the tree are initially uncovered.
The goal of this section is to splice off some tours to cover all but a bounded number of uncovered clients in each tree $T$, i.e., we want each tree rooted in a node rooted in $\mathcal P$ to be {\em small}.

Each time we cover some clients in a tree $T$, we will cut out some subtrees of $T$ and declare some uncovered nodes of $T$ to now be covered. The cost of the tour will be charged to both the cost of the edges removed from $T$ and also to the radial lower bound for the newly-covered clients.

Throughout the algorithm, for a node $v$ in tree $T$ we let $T_v$ be the subtree rooted under $v$ and $U(T_v)$ be the uncovered nodes in $T_v$. For a node $u$ of $T$, let $u_1, u_2, \ldots$ denote its children.
If each $T_{u_i}$ has at most $k$ uncovered nodes, we can consider the following {\bf grouping scheme}. Initially let each $T_{u_i}$ lie in a group of its own. Then while it is possible to merge two groups so the resulting group still has at most $k$ uncovered clients, do so.
For such a group $G$, let $c(G)$ denote the total cost of all subtrees in the group plus the edges connecting these subtrees to $u$ and $U(G)$ be the uncovered clients lying on a subtree in $G$.

For a group $G$ we let $\mathcal T(G)$ be a tour obtained by doubling the edges on all subtrees in $G$ (including the edge connecting subtrees $T_{u_i}$ to $u$) and adding two copies of the cheapest edge connecting some depot $r' \in R$ to some client in $U(G)$, and then shortcutting the resulting Eulerian tour so it only visits $\{r'\} \cup U(G)$.

For two different groups $G, G'$ we also consider the following tour $\mathcal T(G,G')$ that covers exactly $k$ uncovered clients.
\begin{itemize}
\item Let $A$ be the $k-|U(G)|$ clients in $U(G')$ with largest $\ell_v$-values. This is possible since $|U(G)| + |U(G')| > k$ as $G$ and $G'$ cannot be merged.
\item Let $\mathcal T(G,G')$ be the tour spanning $U(G) \cup A$ and a depot $r'$ obtained by first doubling all edges in the subtrees of both $G$ and $G'$ along with the parent edges of these subtrees conneecting them to $u$, then adding two copies of the cheapest edge connecting some $r' \in R$ to some client in $U(G) \cup A$, and finally shortcutting the resulting Eulerian tour so it only visits $\{r'\} \cup U(G) \cup A$.
\end{itemize}

Finally, we introduce two constants: (a) $\beta = 0.5902302342$ governs the behaviour of the algorithm and (b) $\Delta = 1.6353454381$ is used in the analysis. These parameters were chosen in a way to optimize the approximation guarantee obtained from our analysis while satisfying the following bounds that will be used at various points in our arguments.
\begin{itemize}
\item $2 - 0.5/\Delta \leq 1/\beta$
\item $1.5 + (\beta-0.5)/\Delta \leq 1/\beta$
\item $1/(1-\beta) < 3$
\item $0 \leq \frac{\Delta - (1-\beta) \cdot 1.5}{3-2\beta} \leq 1$ and $0 \leq \frac{2 + 3\beta/\Delta}{2+1/(1-\beta)} \leq 1$.
\end{itemize}

Algorithm \ref{alg:pruning} summarizes the tree pruning procedure. Notice if the second case within the while loop is executed, the subtrees in $G_j$ are not pruned from the tree. Figure \ref{fig:prune} illustrates this. Also, this case is only executed if the cost of the resulting tour can be charged against only $c(G_i)$ and $\ell(U(G_i) \cup A)$, i.e. the edges in the subtrees from $G_j$ will not be charged.

\begin{algorithm}
\caption{Pruning a tree $T$ from the forest $F$ rooted at a node $v_T \in R \cup (C - U)$.}
\begin{algorithmic}\label{alg:pruning}
\WHILE{the following removes some clients from $U_T$}
\STATE Let $u$ be any deepest node in $T$ such that $|U(T_u)|> k$, using $u = v_T$ if no such node exists
\STATE Group the subtrees $T_{u_1}, \ldots$ into groups $G_1, G_2, \ldots$ using the procedure discussed above
\IF{$c(\mathcal T(G_i)) \leq 2.5 \cdot c(G_i) + \frac{1}{\beta} \cdot \ell(U(G_i))$ for some group $G_i$}
\STATE Include tour $\mathcal T(G_i)$ in the final answer
\STATE Remove all subtrees in $G_i$ from $T$
\STATE $U \gets U-U(G_i)$
\ELSIF {$c(\mathcal T(G_i, G_j)) \leq 2.5 \cdot c(G_i) + \frac{1}{\beta} \cdot \ell(U(G_i) \cup A)$ for two distinct groups $G_i, G_j$}
\STATE \COMMENT{Here, $A$ is the set of $k-|U(G_i)|$ clients in $U(G_j)$ with largest $\ell_v$-values}
\STATE Include tour $\mathcal T(G_i,G_j)$ in the final answer
\STATE Remove only subtree $G_i$ from $T$ \COMMENT{Clients in $A$ remain in $T$, but are still regarded as covered}
\STATE $U \gets U - (U(G_i) \cup A)$
\ENDIF
\ENDWHILE
\end{algorithmic}
\end{algorithm}

\begin{figure}[ht]
\begin{center}
\includegraphics[width=12cm]{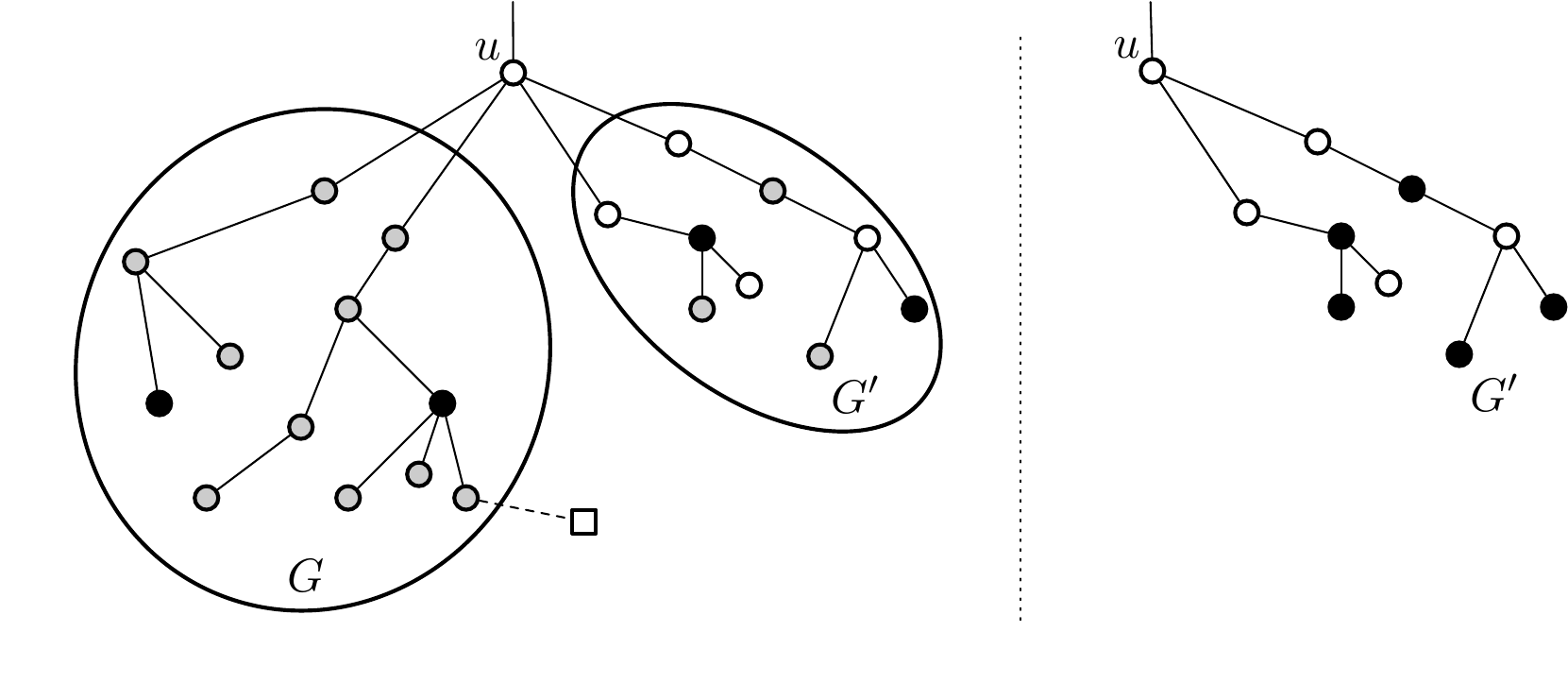}
\end{center}
\caption{{\bf Left}: The black nodes were already covered by a previous pruning step and the grey nodes are the ones to be covered in the tour $\mathcal T(G_i,G_j)$ (so $k = 10$). All nodes of $U(G)$ will be covered and the grey nodes in $G'$ are precisely $A$. The square node is the depot closest to a client in $U(G_i) \cup A$. Tour $\mathcal T(G_i,G_j)$ is obtained
by doubling all edges shown in the picture and shortcutting to only include $U(G_i) \cup A$ and the depot. {\bf Right}: All subtrees in $G_i$ are pruned, but all subtrees in $G_j$ remain. The nodes of $A$ are now covered.}
\label{fig:prune}
\end{figure}

In each iteration of the loop, the node $u$, while uncovered at that time, will not be considered covered by any tour that is found that iteration as such tours only cover nodes in subtrees rooted under children of $u$ even though the parent edge to $u$ of such a subtree is used in the construction of the tour.

The following is immediate from the algorithm.
\begin{corollary}\label{cor:pruning_cost}
Let $F^{init}$ be the initial forest rooted and $F'$ the forest obtained after pruning each tree using Algorithm \ref{alg:pruning}. Similarly, let $U^{init}$ be the initial set of clients that were not covered by $\mathcal P$ and $U'$ be the set of clients that remain uncovered after applying Algorithm \ref{alg:pruning} to each tree of $F^{init}$.

Let $\mathcal T_1, \mathcal T_2, \ldots, \mathcal T_m$ be the tours that were found over all applications of Algorithm \ref{alg:pruning} to each tree in $F^{init}$. Then $\sum_{i=1}^m c(\mathcal T_i) \leq 2.5 \cdot (c(F^{init})-c(F')) + \frac{1}{\beta} \cdot (\ell(U^{init})-\ell(U'))$.
\end{corollary}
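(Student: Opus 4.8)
The plan is a direct telescoping/bookkeeping argument: establish the claimed inequality one while-loop iteration at a time, then sum. Fix one execution of the body of the while loop in Algorithm~\ref{alg:pruning} (within the pruning of some tree $T$ of $F^{init}$) that actually adds a tour, i.e.\ one that takes the \textbf{if} branch or the \textbf{elsif} branch; note that the tours $\mathcal T_1,\dots,\mathcal T_m$ in the statement are exactly the tours produced by these iterations, one per iteration. For such an iteration I will show that the cost of the tour it adds is at most $2.5$ times the decrease it causes in the total forest cost $c(F)$ plus $\tfrac{1}{\beta}$ times the decrease it causes in $\ell(U)$.

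For the \textbf{if} branch, the tour $\mathcal T(G_i)$ is added and the branch guard gives $c(\mathcal T(G_i)) \le 2.5\,c(G_i) + \tfrac{1}{\beta}\,\ell(U(G_i))$. Removing the subtrees of $G_i$ from $T$ deletes precisely their internal edges together with the parent edges joining them to $u$, which by definition is an edge set of total cost $c(G_i)$; and the clients that leave $U$ are exactly $U(G_i)$, whose radial contribution is $\ell(U(G_i))$. So for this iteration the decrease in $c(F)$ is $c(G_i)$ and the decrease in $\ell(U)$ is $\ell(U(G_i))$, matching the guard. For the \textbf{elsif} branch, the tour $\mathcal T(G_i,G_j)$ is added with $c(\mathcal T(G_i,G_j)) \le 2.5\,c(G_i) + \tfrac{1}{\beta}\,\ell(U(G_i)\cup A)$; here only the subtrees of $G_i$ are removed (those of $G_j$ remain in $T$), so the decrease in $c(F)$ is again exactly $c(G_i)$, while the clients leaving $U$ are $U(G_i)\cup A$ (a disjoint union, since $U(G_i)$ and $U(G_j)\supseteq A$ lie on subtrees in different groups, hence of size exactly $k$), whose contribution is $\ell(U(G_i)\cup A)$. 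In both cases the inequality ``(cost of tour added) $\le$ $2.5\cdot$(decrease in $c(F)$) $+\ \tfrac{1}{\beta}\cdot$(decrease in $\ell(U)$)'' holds.

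It remains to sum over all tour-adding iterations across all trees of $F^{init}$. Each such iteration operates on the current residual tree, so the edge sets it deletes are disjoint from those deleted by any other iteration (within a tree because later iterations see a smaller tree, across trees because the trees of $F$ are vertex-disjoint), and every edge of $F^{init}\setminus F'$ is deleted in exactly one iteration; hence the decreases in $c(F)$ telescope to $c(F^{init})-c(F')$. Similarly, each client that ever leaves $U$ does so in exactly one iteration --- a client marked covered in an \textbf{elsif} branch stays in the tree but is thereafter never grouped again, since only uncovered clients are grouped --- so the decreases in $\ell(U)$ telescope to $\ell(U^{init})-\ell(U')$. Adding the per-iteration bounds yields $\sum_{i=1}^m c(\mathcal T_i) \le 2.5\,(c(F^{init})-c(F')) + \tfrac{1}{\beta}\,(\ell(U^{init})-\ell(U'))$, as claimed. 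There is no genuine difficulty here; the only points needing care are that ``remove the subtrees of $G_i$ from $T$'' is understood to also delete the parent edges to $u$ (so the forest-cost drop equals $c(G_i)$ exactly, i.e.\ the same quantity appearing in the branch guard), and that in the \textbf{elsif} case the subtrees of $G_j$ survive and so contribute nothing to the forest-cost drop --- which is consistent, since the guard for that branch charges nothing to those edges either.
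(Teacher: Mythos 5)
Your proof is correct and takes the same approach as the paper's (which is stated in a single sentence): in each tour-adding iteration the branch guard bounds the tour cost by $2.5$ times the exact drop in forest cost plus $\tfrac{1}{\beta}$ times the exact drop in $\ell(U)$, and these drops telescope across iterations since removed edge sets and newly covered clients are disjoint across iterations. Your extra observations (that ``remove the subtrees of $G_i$'' includes the parent edges to $u$ so the forest-cost drop is exactly $c(G_i)$, and that $G_j$'s surviving edges contribute nothing because the guard charges nothing to them) are the only points that need care, and you handle them correctly.
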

\begin{proof}
Each time a tour $\mathcal T_i$ is included in the solution, its cost is bounded by $2.5$ times the cost of the subtrees that are removed plus $1/\beta$ times the radial lower bound of the uncovered clients that were just now covered by $\mathcal T_i$.
\end{proof}

We also show that the remnants of trees of the forest that remain after this step are ``small'' and satisfy an additional property relating their cost to their uncovered clients' radial lower bound contribution if they are not too small.
These properties will be helpful in the final stage of the algorithm.

\begin{lemma}\label{lem:pruning_analysis}
Let $T$ be any tree in the forest $F$ and $T'$ be the result of applying Algorithm \ref{alg:pruning} to $T$. Then $|U(T')| \leq \beta \cdot k$ and, further, if $|U(T')| > k/2$ then $\ell(U(T')) \geq \Delta \cdot c(T')$.
\end{lemma}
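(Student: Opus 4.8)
The plan is to first argue that when Algorithm~\ref{alg:pruning} terminates, the last execution of the \textbf{while} body must have used the root, i.e.\ $u = v_T$ (equivalently, no node $u$ with $|U(T_u)| > k$ survives), and then to derive both claimed bounds from a single-group calculation. When $u = v_T$ in the last iteration, no node has more than $k$ uncovered clients, so in particular $|U(T_{v_T})| \le k$, and the grouping procedure merges \emph{all} child-subtrees of $v_T$ into one group $G$ with $U(G) = U(T')$ (as $v_T \notin U$) and $c(G) = c(T')$. Since the loop removed no client, the first \textbf{if} failed for $G$, so $c(\mathcal T(G)) > 2.5\,c(T') + \tfrac1\beta\,\ell(U(T'))$. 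On the other hand $c(\mathcal T(G)) \le 2\,c(T') + 2\min_{v\in U(T')} c(v,R)$, and $2\min_{v\in U(T')}c(v,R) \le \tfrac{2}{|U(T')|}\sum_{v\in U(T')} c(v,R) = \tfrac{k}{|U(T')|}\,\ell(U(T'))$ since $c(v,R) = \tfrac k2 \ell_v$. Rearranging gives $\big(\tfrac{k}{|U(T')|} - \tfrac1\beta\big)\ell(U(T')) > \tfrac12 c(T') \ge 0$, which (as $\ell(U(T')) > 0$ whenever $U(T') \ne \emptyset$) forces $\tfrac{k}{|U(T')|} > \tfrac1\beta$, i.e.\ $|U(T')| < \beta k$; and if in addition $|U(T')| > k/2$, then $\tfrac{k}{|U(T')|} < 2$, so by the parameter bound $2 - 0.5/\Delta \le 1/\beta$ we get $\tfrac{k}{|U(T')|} - \tfrac1\beta < \tfrac1{2\Delta}$, hence $c(T') < \tfrac1\Delta\,\ell(U(T'))$. (If $U(T') = \emptyset$ both claims are vacuous.)

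It remains to rule out that the final loop iteration used a node $u$ with $|U(T_u)| > k$. Suppose it did. The groups $G_1,\dots,G_a$ satisfy $|U(G_i)| \le k$, $|U(G_i)| + |U(G_j)| > k$ for $i \ne j$ (otherwise they merge), and $\sum_i |U(G_i)| \ge |U(T_u)| - 1 \ge k$. If $|U(G_i)| \ge \beta k$ for some $i$, then $c(\mathcal T(G_i)) \le 2 c(G_i) + \tfrac{k}{|U(G_i)|}\ell(U(G_i)) \le 2.5 c(G_i) + \tfrac1\beta \ell(U(G_i))$, so the first \textbf{if} would fire --- contradicting that the iteration made no progress. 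Hence $|U(G_i)| < \beta k$ for every $i$, which forces $a \ge 2$ and that at most one group has $\le k/2$ uncovered clients. I would then pick two groups --- both of size $> k/2$ if $a \ge 3$, and with at least one of size $\ge k/2$ if $a = 2$ --- relabel them $G_1, G_2$ so that $m_1 := \min_{v\in U(G_1)}\ell_v \le m_2 := \min_{v\in U(G_2)}\ell_v$, and set $n_i := |U(G_i)|$. Two estimates will be used throughout. First, failure of the first \textbf{if} for $G_i$, via $c(\mathcal T(G_i)) \le 2 c(G_i) + k\,m_i$ and $\ell(U(G_i)) \ge n_i m_i$, gives $c(G_i) < \tfrac{2 m_i}{\beta}(\beta k - n_i)$. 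Second, failure of the second \textbf{if} for an ordered pair $(G_i,G_j)$ (with $A$ the corresponding set of top-$\ell$ clients of $U(G_j)$, $|U(G_i)\cup A| = k$), via $c(\mathcal T(G_i,G_j)) \le 2 c(G_i) + 2 c(G_j) + \ell(U(G_i)\cup A)$ --- the last term bounding twice the cheapest depot edge because a minimum of $c(\cdot,R)$ over these $k$ clients is at most their average --- gives $\tfrac12 c(G_i) + (\tfrac1\beta - 1)\ell(U(G_i)\cup A) < 2 c(G_j)$.

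The argument then splits on $n_1$. If $n_1 \ge k/2$ (which holds automatically when $a \ge 3$), apply the second estimate to $(G_2,G_1)$ with set $A'\subseteq U(G_1)$; since each of the $k$ clients of $U(G_2)\cup A'$ has $\ell$-value at least $m_1$ we have $\ell(U(G_2)\cup A') \ge k m_1$, and dropping $\tfrac12 c(G_2)\ge 0$ and substituting the first estimate's bound on $c(G_1)$ yields $(1-\beta)k < 4(\beta k - n_1)$, i.e.\ $n_1 < \tfrac{5\beta-1}{4}\,k < k/2$ for the chosen $\beta$ --- contradicting $n_1 \ge k/2$. If instead $n_1 < k/2$ (so $n_2 > k/2$), add the second-estimate inequalities for $(G_1,G_2)$ and $(G_2,G_1)$, substitute the first-estimate bounds on $c(G_1)+c(G_2)$ together with $\ell(U(G_i))\ge n_i m_i$ and $\ell(A)\ge |A|\cdot\min_{v\in U(G_j)}\ell_v$, and clear denominators; the result has the form $m_1\theta_1 + m_2\theta_2 < 0$ with $\theta_1 + \theta_2 = 3(n_1+n_2) + (2-8\beta)k > (5-8\beta)k > 0$ and $\theta_2 > 0$ (the latter using $n_2 > k/2 > n_1$). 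Then $m_1 \le m_2$ gives $m_1\theta_1 + m_2\theta_2 \ge m_2(\theta_1+\theta_2) > 0$ when $\theta_1 < 0$, and the left side is trivially positive when $\theta_1 \ge 0$ --- a contradiction either way. The main obstacle is precisely this last part: setting up the two-group situation with the right choice of pair, carrying out the (two, effectively four) sub-cases, and verifying that the chosen values of $\beta$ and $\Delta$ --- in particular $2 - 0.5/\Delta \le 1/\beta$ and the other listed parameter bounds --- make every required numerical inequality go through; the remaining steps are routine.
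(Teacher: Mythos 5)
Your proposal is correct, but in the multi-group (contradiction) case it takes a genuinely different route from the paper. The single-group portion (your ``$u = v_T$'' step, deriving $|U(T')| < \beta k$ and $\ell(U(T')) \geq \Delta c(T')$ from the failure of the first \textbf{if} on the one merged group) is essentially the paper's Claim~\ref{claim:groupbound} combined with its short Claim that there is only one group when $u = v_{T'}$. Where you diverge is in ruling out a node with more than $k$ uncovered clients. The paper orders the two groups by \emph{cost} ($c(G) \ge c(G')$), bounds $\min_{v \in U(G)\cup A} 2c(v,R)$ by a two-point \emph{mixed average} with a carefully chosen probability $p$ (different $p$ in each of its two cases), and crucially reuses the per-group bound $\ell(U(G_i)) \ge \Delta\, c(G_i)$; this is where the constraints $0 \le \frac{\Delta - (1-\beta)1.5}{3-2\beta} \le 1$ and $0 \le \frac{2+3\beta/\Delta}{2+1/(1-\beta)} \le 1$ come from. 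You instead order the two groups by the \emph{minimum radial contribution} $m_i = \min_{v\in U(G_i)} \ell_v$, bound the depot edge with the plain uniform average $\ell(U(G_i)\cup A)$ over all $k$ clients, replace the $\Delta$-inequality with the inequality $c(G_i) < \frac{2m_i}{\beta}(\beta k - n_i)$ extracted from the failure of the first \textbf{if}, and close the argument by comparing signed coefficients $\theta_1,\theta_2$ of a linear form in $(m_1,m_2)$. Both arguments are valid for the chosen constants; yours is more elementary (no probability mixing and no use of $\Delta$ outside the single-group step) and only requires $\beta < 3/5$ and $2 - 0.5/\Delta \le 1/\beta$ for this lemma, whereas the paper's argument is more modular (Claim~\ref{claim:groupbound} is stated once and reused in both cases) and keeps all the $\Delta$-dependence in one place. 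One small remark: your estimate $\sum_i |U(G_i)| \ge |U(T_u)| - 1 \ge k$ and the deduction that at most one group has $\le k/2$ uncovered clients should be stated also to exclude an empty group, but maximality of the grouping already forces all groups to be nonempty once $a \ge 2$, so no gap results.
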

The proof involves a number of cases to check and is deferred to Section \ref{sec:pruning_analyze}. 

%%%%%%%%%%%%%%%%%%%%%%%%%%%%%%%%%%%%%%%%%%%%%%%%%%%%%%%%%%%%%%%%%%%%%%%%%%%%%%%%%%%%%%%%%%%%%%%%%%%%%%%%%%%%%%%%%%%%%%%%%%%%%%%%%%%%%%%%%%%%%%%%
%%%%%%%%%%%%%%%%%%%%%%%%%%%%%%%%%%%%%%%%%%%%%%%%%%%%%%%%%%%%%%%%%%%%%%%%%%%%%%%%%%%%%%%%%%%%%%%%%%%%%%%%%%%%%%%%%%%%%%%%%%%%%%%%%%%%%%%%%%%%%%%%

\subsection{Step: Forming the Final Tours}\label{sec:final_part}

Let $F'$ be the portion of the initial forest $F$ that remains after applying Algorithm \ref{alg:pruning} to each tree $T$ of $F$. Recall that $U$ denotes the clients that have not yet been covered by the algorithm, so at this point $U$ is the set of all clients not lying on a path in $\mathcal P$ and not covered by a tour that was produced the various calls to Algorithm \ref{alg:pruning}. Each tree $T'$ in $F'$ remains rooted in a path of $\mathcal P$ and all clients currently in $U$ lie on one of these trees. These trees $T'$ satisfy the properties in Lemma \ref{lem:pruning_analysis}.

For each path $P \in \mathcal P$, let $C_P$ denote the clients on $P$. If $U = \emptyset$ at this point, we would conclude the algorithm by taking each path in $\mathcal P$ and applying the tour splitting procedure from \cite{li1990worst}. This charges each edge on each path at most twice and plus an additional cost of $\ell(C_P)$. But there may be ``small'' trees left in the forest $F'$. One could try to graft them into the path $P$ they are rooted in naively by doubling their edges and shortcutting the Eulerian tour to get a path that can be grafted into $P$. But then when applying tour splitting to the path, we might charge an edge of a tree $T'$ up to four times. In particular, if the tour is split at a node $v$ in a tree $T'$ then the root-to-$v$ path in $T'$ would be doubled twice in the final analysis.

The main idea in this section is to handle this problem with care by selecting a careful depth-first search (DFS) ordering of each tree $T'$ and being careful to only charge a cheap portion of each tree more than twice. The key concept here is to look at how deep the tree is. For a tree $T'$ in what is left of the forest, let $v_{T'}$ be the root of $T'$ and define \texttt{trunk}($T'$) to be the highest-cost path from $v_{T'}$ to a leaf node in $T'$.

Consider a DFS of $T'$ starting at $v_{T'}$ which first recurses along the path to the leaf node defining \texttt{trunk}($T$) but all other decisions in the DFS can be made arbitrarily. Output the uncovered nodes of $T'$ in a post-order traversal in this DFS. In particular, the DFS will not output any node until after it has traveled down along the entire trunk and it will not output the root $v_{T'}$ itself.

To create the final tours we introduce one final concept. Namely, we classify a tree $T'$ in one of two ways.
\begin{itemize}
\item {\bf Short} if $c(\texttt{trunk}(T')) \leq \frac{c(T)}{2} + \frac{\ell(U(T'))}{4} \cdot \frac{|U(T')|}{k}$.
\item {\bf Tall} otherwise.
\end{itemize}

For each path $P \in \mathcal P$ with, say, $P = v_1, v_2, \ldots, v_a$ (here $v_1 \in R$) construct a final ordering of $P$ and all trees of $F'$ rooted in $P$ as follows: after $v_i$ but before $v_{i+1}$, output the nodes of the tree $T'$ of $F'$ rooted at $v_i$ in the post-order DFS traversal mentioned above. See the top part of Figure \ref{fig:pathsplit} to see an example of this ordering.
Let $P'$ be the resulting sequence $r = u_1, u_2, u_3, u_3, \ldots, u_{a'}$. Pick an offset $\tau \in \{1, 2, \ldots, a\}$ and let $S_{P'} = \{u_\tau, u_{\tau+k}, u_{\tau+2k}, \ldots \}$ be nodes about which we will {\em split} the sequence $P'$. For brevity $r_i$ be the depot nearest to $u_i$ for each $i$.

We create tours as follows. First, partition $P'$ into subsequences using this random offset, i.e., for each integer $b \geq 0$ consider the sequence $P'_b := u_{b \cdot k + \tau}, u_{b \cdot k + \tau+1}, \ldots, u_{(b+1) \cdot k + \tau - 1}$ (omitting indices outside the range $[1,a']$). So all but, perhaps, the first
and last subsequence have length exactly $k$.

For each subsequence $P'_b$, if it is not the last subsequence and if the next client $u_{(b+1) \cdot k + \tau}$ (which is in $S_{P'}$) lies in $U(T')$ for a tall tree $T'$, we further split $P'_b$ as follows. Say $u_j$ is the base of $T'$, notice $u_j$ lies in $P'_b$ because $|U(T')| \leq \beta \cdot k < k$. Split $P'_b$ into two subsequences: the prefix up to and including $u_j$ and the suffix after $u_j$. Finally, append $u_{(b+1) \cdot k + \tau - 1}$ to the second subsequence. After doing this for all $P'_b$, note the following properties: (a) each subsequence contains exactly one node in $S_{P'}$ or the depot $u_1$ at the start of $P$, (b) each node $S_{P'}$ lies in two subsequences only if it is in $U(T')$ for a tall tree $T'$, otherwise it appears on exactly one subsequence.

Finally, for each of the resulting subsequences we obtain a tour spanning by doubling all edges of $P \cup F'$ lying between clients of the subsequence and adding a doubled edge connecting the unique node of $\{u_1\} \cup S_{P'}$ in the subsequence to its nearest depot.
Shortcut the resulting tour past all nodes except the depot and those in the subsequence. If the subsequence was the suffix of a split subsequence of the form $P'_b$, then also shortcut the tour past $u_{(b+1) \cdot k + \tau}$ as it will be covered in the next subsequence. This way, every tour spans at most $k$ clients and all uncovered clients grafted onto $P$ are now covered.
This procedure is depicted in Figure \ref{fig:pathsplit}.

\begin{figure}[ht]
\begin{center}
\includegraphics[width=13cm]{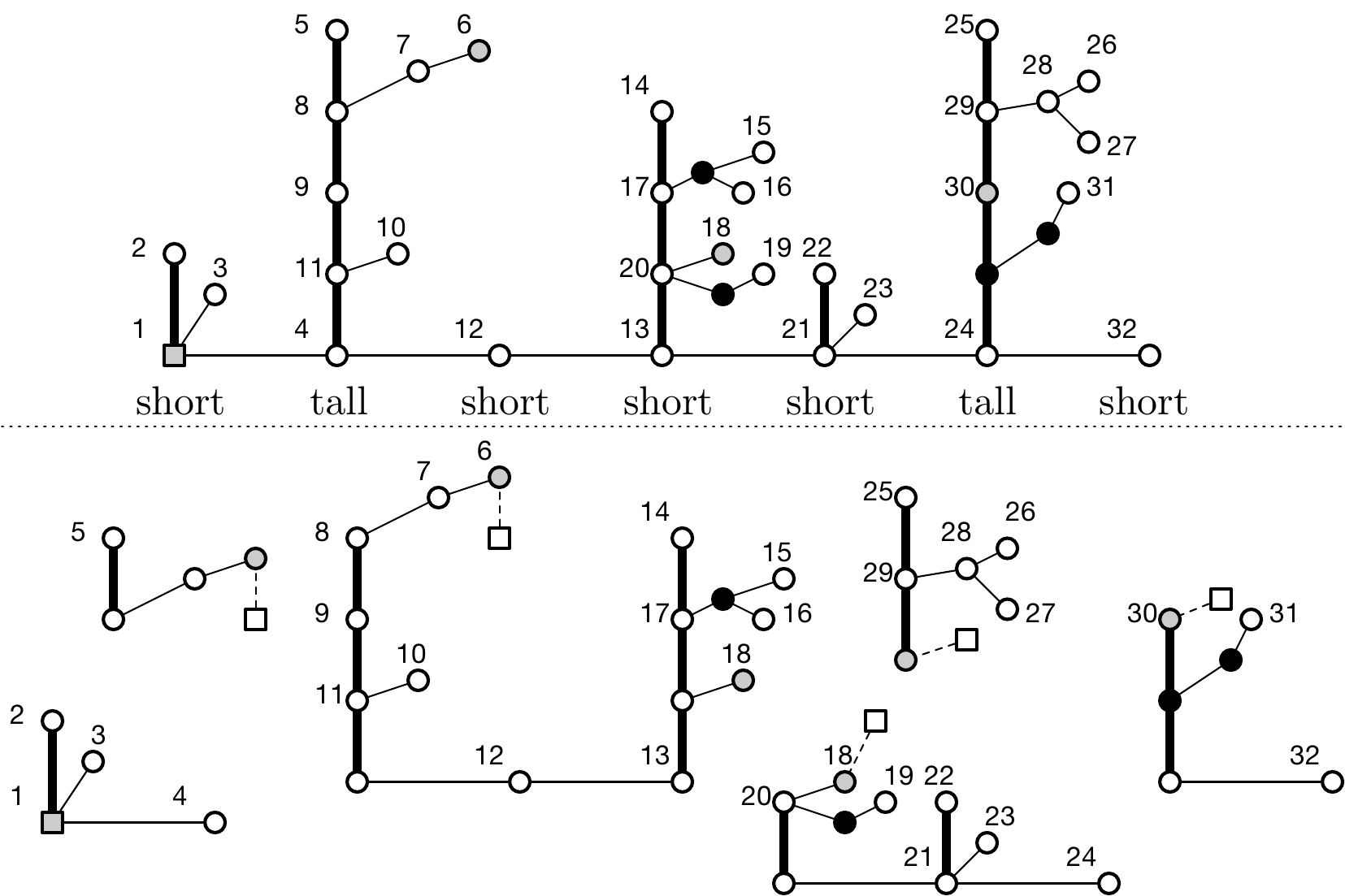}
\end{center}
\caption{{\bf Top}: A path $P \in \mathcal P$, depicted horizontally, and the trees $T'$ of $F'$ with a root in $P$. The thick vertical edges are the trunks of the trees. The numbers indicate the final ordering of the nodes. The black nodes were covered earlier by tours in the tree pruning stage.
With $k = 12$ and offset $\tau = 6$, the set consisting of the depot and $S_{P'}$, depicted with grey nodes, would be nodes $1, 6, 18$ and $30$ in this ordering.
{\bf Bottom}: The edges that were doubled to form the various tours. The white square nodes are depots nearest each node of $S_{P'}$. Notice each edge lies on at most one subtree with the following
exceptions: for a large tree $T'$ some edges from a single path between $S_{P'}$ and the trunk appear on two subtrees and for a small tree $T'$ some edges on a single path from $S_{P'}$ to the root of the tree appear on two subtrees. A client that is not numbered in means we do not consider it as being covered by the corresponding tour.}
\label{fig:pathsplit}
\end{figure}

To analyze the cost of this tour, for each subsequence we double all edges of $P$ and the various trees $T'$ rooted in $P$ that lie between two nodes of the subsequence. Intuitively, every edge of $P$ is used by at most one subsequence and every edge of each tree $T'$ is as well with two exceptions:
\begin{itemize}
    \item In short trees $T'$, if some node of $U(T')$ lies in $S_{P'}$ then edges lying on some path to the root $v_{T'}$ will be used by two subsequences. This happens only with probability $|U(T')|/k$ and if it does, the cost of the edges that are used by two subsequences is at most $c(\texttt{trunk}(T'))$.
    \item In tall trees $T'$, if some node of $U(T')$ lies in $S_{P'}$ then edges lying on some path from $S_{P'}$ to a node on $\texttt{trunk}(T')$ will be used by two subsequences. This happens only with probability $|U(T')|/k$ and if it does, the cost of the edges that are used by two subsequences is at most $c(T') - c(\texttt{trunk}(T'))$. But in this case we also have to remember that the node $v \in U(T') \cap S_{P'}$ will be used to connect two subsequences to its nearest depot (i.e. we will charge $\ell_v$ twice in this case).
\end{itemize}
The precise analysis is summarized below, the proof appears in Section \ref{sec:splitting_analyze}.
\begin{lemma}\label{lem:splitting}
Let $c_{F'}(P)$ be the total cost of edges of $F'$ in trees $T'$ with a root in $P'$ and let $U_{F'}(P)$ be the uncovered clients lying in these trees.
For some choice of offset $\tau$, the total cost of these tours is at most $2 \cdot c(P) +2.5 \cdot c_{F'}(P) +  \ell(C_P) + \frac{1}{\beta} \cdot \ell(U_{F'}(P))$.
\end{lemma}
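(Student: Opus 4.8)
The plan is to bound the expected total cost of the tours over the random offset $\tau$ and then invoke averaging to extract a single good offset. First I would account, for each edge of $P$, each edge of a tree $T'$ rooted in $P$, and each ``connection to nearest depot'' charge, the number of subsequences in which it is doubled. For edges of $P$: every such edge lies strictly between consecutive clients of $P'$, and two consecutive nodes of $P'$ belong to the same subsequence unless a split point falls between them; hence each edge of $P$ is doubled in at most one subsequence, contributing $2 c(P)$ deterministically. For the connection charges: each subsequence connects its unique node of $\{u_1\}\cup S_{P'}$ to the nearest depot with a doubled edge, contributing $2 c(u,R) = k \cdot \ell_u$ for that node $u$; since every client of $C_P$ lies in $S_{P'}$ with probability exactly $1/k$ (for a uniformly random $\tau \in \{1,\dots,a\}$, thinking of indices along $P$; I would need to be a little careful at the ends of $P'$ but this only helps), the expected total connection cost coming from clients on $P$ is at most $\sum_{v \in C_P} \frac{1}{k} \cdot k \cdot \ell_v = \ell(C_P)$. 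A node $v \in S_{P'}$ in a tall tree is counted in two subsequences, but this is already folded into the tall-tree term below.

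Next I would handle the edges of a tree $T'$ rooted in $P$, splitting into the two cases from the discussion preceding the lemma. Using the post-order DFS ordering in which the trunk is traversed first, the uncovered nodes of $T'$ form a contiguous block in $P'$ (they are output consecutively, right after the DFS descends the whole trunk, and before $v_{T'}$ is returned to). If no node of $U(T')$ is a split point, all of $T'$'s uncovered clients lie in one subsequence, so every edge of $T'$ is doubled in exactly one subsequence: cost $2 c(T')$. If some $v \in U(T') \cap S_{P'}$: for a \textbf{short} tree, the further-split logic does not trigger (splits inside trees only happen for tall trees), so $v$ and the clients of $U(T')$ before it go into the previous subsequence and those after go into the next; the edges doubled by both subsequences lie on the $v_{T'}$-to-$v$ path, and since in a short tree every root-leaf path has cost at most $c(\texttt{trunk}(T'))$ by maximality of the trunk — wait, that is the trunk being the \emph{maximum}-cost root-leaf path, so any root-to-$v$ prefix of a root-leaf path costs at most $c(\texttt{trunk}(T'))$ — the extra doubling costs at most $2 c(\texttt{trunk}(T'))$. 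For a \textbf{tall} tree, the algorithm splits $P'_b$ at the base $u_j$ of $T'$ and appends the last element; the effect is that the edges used by two subsequences are exactly those on the path from $v$ down to where it rejoins the trunk, of total cost at most $c(T') - c(\texttt{trunk}(T'))$, and additionally the node $v$ connects two subsequences to its depot, an extra $\le k \ell_v \le \ell(U(T')) \cdot \tfrac{k}{1}$... more precisely an extra $2 c(v,R) = k \ell_v$, which I bound crudely by noting $\ell_v \le \ell(U(T'))$.

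Then I assemble the expectation. For each tree $T'$ rooted in $P$, the event that some node of $U(T')$ is a split point has probability at most $|U(T')|/k$ (the $|U(T')|$ contiguous positions occupied by $U(T')$ each land in $S_{P'}$ with probability $1/k$, and they are disjoint events). So the expected cost charged to $T'$ is at most
\[
2 c(T') + \frac{|U(T')|}{k}\Bigl( 2 c(\texttt{trunk}(T')) \Bigr) \quad \text{(short)}, \qquad
2 c(T') + \frac{|U(T')|}{k}\Bigl( 2(c(T')-c(\texttt{trunk}(T'))) + k\ell_v \Bigr) \quad \text{(tall)}.
\]
In the short case, substituting $c(\texttt{trunk}(T')) \le \tfrac{c(T')}{2} + \tfrac{\ell(U(T'))}{4}\cdot\tfrac{|U(T')|}{k}$ and using $|U(T')|/k \le \beta < 1$ gives a bound of the form $2.5\, c(T') + \tfrac{1}{\beta}\ell(U(T'))$ after checking the constant (this is where the constraint $1.5 + (\beta-0.5)/\Delta \le 1/\beta$ and $|U(T')|^2/k^2 \le \beta |U(T')|/k$ come in; I also use $\ell(U(T')) \ge \Delta c(T')$ from Lemma \ref{lem:pruning_analysis} when $|U(T')| > k/2$, and a separate easy bound when $|U(T')| \le k/2$). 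In the tall case, $c(\texttt{trunk}(T')) > \tfrac{c(T')}{2} + \tfrac{\ell(U(T'))}{4}\cdot\tfrac{|U(T')|}{k}$ forces $c(T') - c(\texttt{trunk}(T'))$ to be small relative to $c(T')$ and $\ell(U(T'))$, and combined with $k\ell_v \cdot |U(T')|/k = \ell_v |U(T')| \le \ell(U(T'))\cdot|U(T')|/\,$(something) — I would instead just use that the tour connecting two subsequences to $v$'s depot contributes the $\ell(U(T'))$-budget — one again reaches $2.5\, c(T') + \tfrac{1}{\beta}\ell(U(T'))$; here the relevant algebra is exactly the inequality $2 - 0.5/\Delta \le 1/\beta$ together with $1/(1-\beta) < 3$. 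Summing over all trees $T'$ rooted in $P$ gives $2.5\, c_{F'}(P) + \tfrac{1}{\beta}\ell(U_{F'}(P))$, and adding $2c(P) + \ell(C_P)$ from the first paragraph yields the claimed expected bound; since it holds in expectation over $\tau$, some offset $\tau$ achieves it.

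The main obstacle I anticipate is the tall-tree bookkeeping: verifying that after the extra split of $P'_b$ at the base of the tall tree (and the appended last element) the edges doubled twice are precisely a single root-to-trunk-joining path, that no \emph{other} edge of any tree gets doubled twice, and that the ``$v$ connects two subsequences'' charge is correctly attributed and absorbed by the $\tfrac1\beta \ell(U(T'))$ term rather than double-counted against $\ell(C_P)$. The short/tall threshold was evidently chosen so that in each regime exactly one of the two extra-cost contributions (edge cost versus the depot-connection cost $\ell_v$) is the binding one, and checking that the four displayed parameter inequalities on $\beta, \Delta$ are exactly what make both regimes give the same final constants $2.5$ and $1/\beta$ is the delicate part; I expect to spend most of the proof on those two case computations.
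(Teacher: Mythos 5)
Your plan follows the same route as the paper: bound the expected cost over a uniformly random offset $\tau$, account separately for edges of $P$, edges of trees $T'$, and depot-connection charges at the nodes of $S_{P'}$, split trees into short and tall via the trunk criterion, and then use the guarantees of Lemma~\ref{lem:pruning_analysis} (together with the two cases $|U(T')|\leq k/2$ and $>k/2$) to land the per-tree term under $2.5\,c(T')+\tfrac1\beta\ell(U(T'))$. Your identification of which path of $T'$ gets doubled twice (root-to-$v$ path for short, $v$-to-trunk path for tall) also matches the paper.

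There is, however, a genuine gap in your $\ell$-accounting for tree nodes. Every node $v\in U(T')$, for \emph{every} tree, can be the unique element of $S_{P'}\cap U(T')$, in which case it receives a doubled depot-connection edge of cost $2c(v,R)=k\ell_v$; taking expectation over $\tau$ this contributes $\sum_{v\in U(T')}\tfrac1k\cdot k\ell_v=\ell(U(T'))$ per tree, on top of the $\ell(C_P)$ you correctly compute for path clients. Your per-tree terms omit this base $\ell(U(T'))$ entirely for short trees and only partially capture it for tall trees; moreover, in the tall case the quantity $k\ell_v$ is a random variable depending on which $v$ falls in $S_{P'}$, so writing ``$\frac{|U(T')|}{k}\bigl(\cdots+k\ell_v\bigr)$'' is not a well-formed deterministic bound --- the clean computation is to take the expectation $\sum_v\tfrac1k\cdot k\ell_v=\ell(U(T'))$ directly, once for the first depot connection (all trees) and once more for the second depot connection (tall trees only). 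With the corrected bookkeeping, the per-tree term is $2c(T')+\ell(U(T'))+2\tfrac{|U(T')|}{k}c(\texttt{trunk}(T'))$ for short and $2c(T')+2\ell(U(T'))+2\tfrac{|U(T')|}{k}\bigl(c(T')-c(\texttt{trunk}(T'))\bigr)$ for tall, and the short/tall threshold then makes both sides yield $(2+|U(T')|/k)c(T')+1.5\ell(U(T'))$, from which the target $2.5\,c(T')+\tfrac1\beta\ell(U(T'))$ follows as you describe. Your plan has the right skeleton but would not close without repairing this accounting.
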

So the algorithm for this section is to try all offsets $\tau$ and keep the tours that arise from the choice that leads to the cheapest solution.
Summing this bound over all $P \in \mathcal P$ yields the following.
\begin{corollary}\label{cor:splitting_cost}
Let $C'$ denote all clients lying on some path in $\mathcal P$, $F'$ and $U'$ denote the resulting forest and, respectively, the set of clients not covered after applying Algorithm \ref{alg:pruning} to each tree in the initial forest $F^{init}$.
Applying this tour-splitting procedure to all paths $P \in \mathcal P$ produces tours covering all clients in $C' \cup U'$ using tours with total cost at most $2 \cdot c(\mathcal P) + 2.5 \cdot c(F') + \ell(C') + \frac{1}{\beta} \cdot \ell(U')$.
\end{corollary}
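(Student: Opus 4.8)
The plan is to derive Corollary~\ref{cor:splitting_cost} by applying Lemma~\ref{lem:splitting} once to every path $P \in \mathcal P$ and summing the four terms of the resulting bound. The key preliminary observation is that, as $P$ ranges over $\mathcal P$, the per-path quantities $c(P)$, $c_{F'}(P)$, $\ell(C_P)$ and $\ell(U_{F'}(P))$ partition the global quantities $c(\mathcal P)$, $c(F')$, $\ell(C')$ and $\ell(U')$, respectively. For $c(P)$ and $\ell(C_P)$ this is immediate from property~(c) of the sampled paths: every client of $C'$ lies on exactly one path, and $\sum_{P} c(P) = c(\mathcal P)$ by definition. For $c_{F'}(P)$ and $\ell(U_{F'}(P))$ we need that every tree of $F'$ is associated with a unique path: its root lies in $R \cup (C - U^{init})$, and $C - U^{init} = C'$ since $U^{init}$ is precisely the set of clients off the paths, so if the root is a client it determines the path uniquely; if the root is a depot $r$, we associate the tree with any one path of $\mathcal P_r$, adding a trivial zero-cost single-node path $(r)$ to $\mathcal P$ if none exists (which changes none of the four global quantities). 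Since the uncovered clients $U'$ are exactly those lying on the trees of $F'$ and each such client lies on a unique tree, this yields $\sum_{P} c_{F'}(P) = c(F')$ and $\sum_{P} \ell(U_{F'}(P)) = \ell(U')$.

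With this in hand, I would first verify coverage: for a fixed $P$, the subsequences of $P'$ collectively list every client of $C_P$ together with every uncovered client on a tree of $F'$ rooted in $P$, and the extra re-splitting at the bases of tall trees only ever makes a split node appear in two subsequences rather than zero, so each such client is covered by at least one of the resulting tours, and by construction each tour visits at most $k$ clients. Taking the union over all $P \in \mathcal P$ shows the procedure covers all of $C' \cup U'$. Second, for each $P$ Lemma~\ref{lem:splitting} guarantees an offset $\tau_P$ for which the tours produced from $P$ and the trees rooted in it cost at most $2\,c(P) + 2.5\, c_{F'}(P) + \ell(C_P) + \frac{1}{\beta}\,\ell(U_{F'}(P))$. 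Running the procedure with this offset independently for each path and summing over $P \in \mathcal P$, the four partition identities turn the right-hand side into $2\,c(\mathcal P) + 2.5\, c(F') + \ell(C') + \frac{1}{\beta}\,\ell(U')$, which is exactly the claimed bound.

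The genuine content of the statement is in Lemma~\ref{lem:splitting}, where the careful DFS ordering, the short/tall dichotomy, and the offset averaging are used; the corollary itself is just the bookkeeping that glues the per-path bounds together. Accordingly, the only step that requires any care is checking that the per-path objects really do partition the global ones without double-counting or omission --- in particular handling trees of $F'$ that hang off depots carrying no sampled path, which is why the trivial-path convention is introduced --- and this is routine once property~(c) and the definition of $F'$ as rooted in $R \cup (C - U)$ are invoked.
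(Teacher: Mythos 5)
Your proof is correct and follows the same route the paper takes: the paper's own justification is a single sentence (``Summing this bound over all $P \in \mathcal P$ yields the following''), and you reproduce exactly that argument, applying Lemma~\ref{lem:splitting} per path and summing, together with the coverage check and the partition identities that make the sum telescope into the global quantities. The one place you go beyond the paper --- handling trees of $F'$ whose root is a depot $r$ with no sampled path in $\mathcal P_r$ by inserting a trivial single-vertex path $(r)$ --- is a genuine and worthwhile observation: the paper's phrasing ``trees $T'$ rooted in $P$'' silently assumes every tree of $F'$ attaches to some path, and since $F'$ is rooted in $R \cup (C-U)$ this need not hold without such a convention (a tree can hang directly off a depot carrying no path, or off a depot carrying several paths, in which case ``rooted in $P$'' is ambiguous). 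Your convention resolves both issues without changing any of the four global sums, so the corollary's bound and coverage claim survive intact; it would be worth sanity-checking that Lemma~\ref{lem:splitting}'s proof (offset averaging over $\tau$ in particular) still goes through when $P$ is the single vertex $r$, but since $c(P)=0$ and $C_P=\emptyset$ the bound only tightens.
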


%%%%%%%%%%%%%%%%%%%%%%%%%%%%%%%%%%%%%%%%%%%%%%%%%%%%%%%%%%%%%%%%%%%%%%%%

\subsection{Approximation Guarantee Analysis}
For this section, we revert to letting $U$ be the set of clients initially not covered by the paths $\mathcal P$.
Considering both the tree pruning and final tours obtained by splitting paths in $\mathcal P$, Corollaries \ref{cor:pruning_cost} and \ref{cor:splitting_cost} show the total cost of all tours is at most
\[ 2 \cdot c(\mathcal P) + 2.5 \cdot c(F) + \ell(C-U) + \frac{1}{\beta} \cdot \ell(U) = 2 \cdot c(\mathcal P) + 2.5 \cdot c(F) + \LB + \left(\frac{1}{\beta} - 1\right) \cdot \ell(U). \]

Taking this in expectation over the random sampling of branchings and using Lemmas \ref{lem:path}, \ref{lem:penalty} and \ref{lem:bridge} the expected cost of the final solution is at most
\[ 2 \gamma \cdot (1 + \delta) \cdot \opt_{LP} + 2.5 \cdot e^{-\gamma} \cdot \opt_{LP} + \LB + \left(\frac{1}{\beta} - 1\right)\cdot e^{-\gamma} \cdot  \LB.\]
Using $\LB \leq (1-\delta) \cdot \opt_{LP}$ (Lemma \ref{lem:lb}), the approximation guarantee (relative to $\opt_{LP}$) is at most
\[ 2 \gamma \cdot (1+\delta) + \left(2.5 + (1-\delta) \cdot (1/\beta - 1))\right) \cdot e^{-\gamma} + (1-\delta). \]

To help us choose the best constants, we considered when $\delta = 0$ and optimize for this case. The guarantee is of the form $A \gamma + B e^{-\gamma} + C$ which is minimized by selecting $\gamma = \ln \frac{B}{A}$. In the case $\gamma = 0$, we have $A = 2, B = 1.5 + 1/\beta$ and $C = 1$. Using $\beta = 0.59024$ has us set $\gamma := 0.46821$. We remark that not every choice of $\beta$ can be considered as the proofs from previous sections only work if $\beta$ can be chosen such that there is some $\Delta$ can be satisfying the bounds stated in Section \ref{sec:pruning}. Plugging this back in to the approximation guarantee for arbitrary $\delta$, we see it is, in expectation, at most $3.9365 - 0.49826 \cdot \delta$.

\subsection{Compatibility with \cite{blauth2021improving} and Further Improvements}
Suppose $\delta'$ is defined in so that $\LB = (1-\delta') \cdot \opt$.
If we add the constraint $\sum_{r,v} 2c(v,R) \leq (1-\delta') \cdot \opt$ in our LP relaxation\footnote{By a standard scaling argument, we can assume edge costs are polynomially-bounded integers which would then allow us to guess the value of $\delta'$ exactly while only losing an arbitrarily-small constant $\epsilon$ in the guarantee.}, it would still be valid to say our algorithm finds a solution whose cost is at most $(3.9365 - 0.49826 \cdot \delta') \cdot \opt$.

If $\delta'$ can be bounded from below by a small constant, our approximation guarantee is slightly better than $3.9365$. On the other hand, for the function $f(\delta)$ in \cite{blauth2021improving}, the tour splitting algorithm for \cvrpmd can be modified to find a solution with an approximation guarantee of $2 \cdot (1+f(\delta)) + 1$. The function $f(\delta)$ approaches $0$ as $\delta \rightarrow 0$, but it does so fairly slowly. Still, it eventually becomes small enough when $\delta$ is bounded above by a small constant so that taking the better of our algorithm and the tour splitting modification from \cite{li1990worst} will improve the approximation guarantee a bit further, though the improvement will only be in the order of $10^{-3}$ as it was in \cite{blauth2021improving} and \cite{ZX23}. We have not computed this marginally-improved constant.

Finally, one could get a bit more out of this idea and choose $\beta$ and $\Delta$ optimally in terms of $\delta$ as well but this idea seems to only get very slight improvements (certainly smaller than $10^{-3}$). 

%%%%%%%%%%%%%%%%%%%%%%%%%%%%%%%%%%%%%%%%%%%%%%%%%%%%%%%%%%%%%%%%%%%%%%%%%%%%%%%%%%%%%%%%%%%%%%%%%%%%%%%%%%%%%%%%%%%%%%%%%%%%%%%%%%%%%%%

%%%%%%%%%%%%%%%%%%%%%%%%%%%%%%%%%%%%%%%%%%%%%%%%%%%%%%%%%%%%%%%%%%%%%%%%%%%%%%%%%%%%%%%%%%%%%%%%%%%%%%%%%%%%%%%%%%%%%%%%%%%%%%%%%%%%%%%

\section{Completing the Analysis}\label{sec:analysis}

\subsection{Proof of Lemma \ref{lem:pruning_analysis}}\label{sec:pruning_analyze}
We prove this by contradiction, namely we show if $T'$ does not satisfy these bounds then the algorithm should have found another tour rather than terminating with $T'$.

Let $u$ be the deepest node with $|U(T'_v)| > k$, letting $u = v_{T'}$ (the root of $T'$) if there is no such node, i.e. the current tree $T'$ has at most $k$ uncovered nodes. Let $G_1, G_2, \ldots$ be the groups formed form the subtrees rooted at children of $u$. So $|U(G_i)| \leq k$ for each group $G_i$ yet $|U(G_i)| + |U(G_j)| > k$ for any two distinct groups $G_i$ and $G_j$.

We first establish that an individual group would satisfy the bounds from the lemma statement.
\begin{claim}\label{claim:groupbound}
For any group $G_i$, we have $|U(G_i)| \leq \frac{1}{\beta} \cdot k$ and if $|U(G_i)| > k/2$ then $\ell(U(G_i)) \geq \Delta \cdot c(G_i)$.
\end{claim}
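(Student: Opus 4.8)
The plan is to prove both parts by contradiction: if a group $G_i$ violated one of the stated bounds, then one of the two tour-creation cases in Algorithm \ref{alg:pruning} would have fired on $G_i$ (possibly paired with another group), so the algorithm would not have terminated with the current tree. Throughout I would use the triangle inequality to bound the shortcut Eulerian tours: a tour $\mathcal{T}(G_i)$ doubling the subtrees of $G_i$ (with their parent edges to $u$) plus two copies of the cheapest depot edge has cost at most $2 c(G_i) + 2 \min_{v \in U(G_i)} c(v,R)$, and since the minimum is at most the average weighted by $\ell$-values, this is at most $2 c(G_i) + k \cdot \ell(U(G_i))/|U(G_i)|$ after recalling $\ell_v = \frac{2}{k} c(v,R)$. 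The key will be to compare this expression against the acceptance threshold $2.5 \cdot c(G_i) + \frac{1}{\beta} \ell(U(G_i))$.

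\textbf{First part ($|U(G_i)| \le k/\beta$).} Suppose for contradiction some group has $|U(G_i)| > k/\beta$; note $k/\beta < k \cdot 3$ but more relevantly I would want to derive that $\mathcal{T}(G_i)$ is acceptable. Actually, since each group is formed by merging subtrees each of size at most $k$ and stopping when no merge keeps the total at most $k$, a single group always has $|U(G_i)| \le k$ — wait, that is not quite the right reading: the grouping keeps each group at most $k$ by construction, so in fact $|U(G_i)| \le k < k/\beta$ is immediate from the grouping scheme (since $\beta < 1$). So the first part needs no contradiction; it follows directly from how groups are built, and I would simply state this. \emph{(If the intended statement is subtler — e.g. groups can temporarily exceed $k$ — I would instead argue that a too-large group could be split off as a tour because the averaging bound $2c(G_i) + k\ell(U(G_i))/|U(G_i)| \le 2c(G_i) + \beta \ell(U(G_i)) \le 2.5 c(G_i) + \frac{1}{\beta}\ell(U(G_i))$ holds once $|U(G_i)| \ge k/\beta$ and $\beta \le 1/\beta$, hence it would have been removed.)}

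\textbf{Second part (if $|U(G_i)| > k/2$ then $\ell(U(G_i)) \ge \Delta \cdot c(G_i)$).} Suppose $|U(G_i)| > k/2$ but $\ell(U(G_i)) < \Delta \cdot c(G_i)$. I would show the first \textbf{if} branch of Algorithm \ref{alg:pruning} accepts $\mathcal{T}(G_i)$, contradicting termination. Using the averaging bound, $c(\mathcal{T}(G_i)) \le 2 c(G_i) + k \cdot \ell(U(G_i))/|U(G_i)| < 2 c(G_i) + 2 \ell(U(G_i))$ since $|U(G_i)| > k/2$. It then suffices that $2 c(G_i) + 2 \ell(U(G_i)) \le 2.5 c(G_i) + \frac{1}{\beta}\ell(U(G_i))$, i.e. $(2 - \tfrac{1}{\beta})\ell(U(G_i)) \le \tfrac12 c(G_i)$. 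If $2 - 1/\beta \le 0$ this is automatic; otherwise, using the assumed $\ell(U(G_i)) < \Delta c(G_i)$, it suffices that $(2 - 1/\beta)\Delta \le 1/2$, which rearranges to $2 - \tfrac{1}{2\Delta} \le \tfrac{1}{\beta}$ — exactly the first listed constraint on $\beta,\Delta$. Hence $\mathcal{T}(G_i)$ would have been accepted, a contradiction.

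The main obstacle I anticipate is getting the averaging/triangle-inequality bound on the shortcut Eulerian tour exactly right, since $\mathcal{T}(G_i)$ connects to the \emph{cheapest} depot edge from \emph{any} client in $U(G_i)$, and I want to charge this against $\ell(U(G_i))$ rather than a single client's contribution; the clean way is $\min_{v} \frac{2}{k}c(v,R) \le \frac{1}{|U(G_i)|}\sum_{v \in U(G_i)} \frac{2}{k} c(v,R) = \ell(U(G_i))/|U(G_i)|$, then multiply the depot-edge cost $2\min_v c(v,R)$ by noting $2 c(v,R) = k \ell_v$. A secondary subtlety is confirming that the parent edges of the subtrees to $u$ are correctly included in $c(G_i)$ (they are, by the definition of $c(G)$), so that the doubling bound $2c(G_i)$ genuinely covers every edge used in the Eulerian construction.
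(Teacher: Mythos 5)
Your argument for the second part is exactly the paper's: bound $c(\mathcal T(G_i)) \le 2c(G_i) + k\,\ell(U(G_i))/|U(G_i)| < 2c(G_i) + 2\ell(U(G_i))$ via the averaging trick, then use the assumed $\ell(U(G_i)) < \Delta c(G_i)$ together with the parameter constraint $2 - 0.5/\Delta \le 1/\beta$ to conclude the first branch of Algorithm \ref{alg:pruning} would have fired. That part is correct.

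The first part has a genuine gap, though your instinct that something is off was right. As written, $|U(G_i)| \le \frac{1}{\beta}k$ is indeed trivial from the grouping scheme (since $|U(G_i)| \le k < k/\beta$), but this is a typo: the intended and needed bound is $|U(G_i)| \le \beta\cdot k$, matching the analogous statement in Lemma~\ref{lem:pruning_analysis} and used explicitly later in Case~1 of the proof of that lemma ("$k/2 \le |U(G)| \le \beta\cdot k$"). Your parenthetical fallback argues the wrong direction: you show a tour would fire once $|U(G_i)| \ge k/\beta$, which only reproves the trivial bound. The paper's argument is that if $|U(G_i)| \ge \beta\cdot k$ then $k/|U(G_i)| \le 1/\beta$, so $c(\mathcal T(G_i)) \le 2c(G_i) + \frac{k}{|U(G_i)|}\ell(U(G_i)) \le 2c(G_i) + \frac{1}{\beta}\ell(U(G_i)) \le 2.5\,c(G_i) + \frac{1}{\beta}\ell(U(G_i))$, so the first branch fires --- giving $|U(G_i)| < \beta k$. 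When the literal reading of a claim collapses to a triviality while the surrounding argument clearly demands more, it is worth suspecting a typo and reverse-engineering the intended bound; your averaging machinery was already in place to prove the correct $\beta k$ bound once you flip the threshold.
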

\begin{proof}
Note for any group $G_i$ that $\mathcal T(G_i)$ was first formed by doubling all edges in subtrees in $G_i$ plus their parent edges to $u$ and then adding a cheapest doubled edge connecting some depot in $R$ to some uncovered client in $U(G_i)$. So
\[ c(\mathcal T(G_i)) \leq 2 \cdot c(G_i) + \frac{k}{|U(G_i)|} \cdot \ell(U(G_i)) \]
because $\min_{w \in U(G_i)} 2 \cdot c(w,R) \leq \frac{1}{|U(G_i)|} \sum_{w \in U(G_i)} 2 \cdot c(w,R) = \frac{k}{|U(G_i)|} \cdot \ell(U(G_i))$.
So if $|U(G_i)| \geq \beta \cdot k$ then $\mathcal T(G_i) \leq 2 \cdot c(G_i) + \frac{k}{|U(G_i)|} \ell(U(G_i)) \leq 2 \cdot c(G_i) + \frac{1}{\beta} \cdot \ell(U(G_i))$
meaning the algorithm would have found another tour, a contradiction.

Similarly, if $|U(G_i)| > k/2$ yet $\ell(U(G_i)) \leq \Delta \cdot c(T')$ then
\[ c(\mathcal T(G_i)) \leq 2 \cdot c(G_i) + 2 \cdot \ell(U(G_i)) \leq 2.5 \cdot c(G_i) + (2 - 0.5/\Delta) \cdot \ell(U(G_i)) \leq 2.5 \cdot c(G_i) + \frac{1}{\beta} \cdot \ell(U(G_i)) \]
where we recall $2-0.5/\Delta \leq 1/\beta$. So again the algorithm would have found another tour. A contradiction.
\end{proof}

\begin{claim}
There are at least two groups.
\end{claim}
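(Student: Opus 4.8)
The plan is to argue by contradiction inside the overall contradiction framework of the proof of Lemma~\ref{lem:pruning_analysis}: assuming the final tree $T'$ violates the claimed bounds, I will show that if the subtrees rooted at the children of $u$ collapse into a single group $G_1$, then the first \textbf{if}-branch of Algorithm~\ref{alg:pruning} would have fired on $G_1$, removing clients from $U_T$ and contradicting the fact that the \textbf{while} loop terminated with $T'$. Here $G_1$ necessarily consists of \emph{all} the child subtrees $T'_{u_1}, T'_{u_2}, \ldots$ together with their parent edges to $u$, so $c(G_1) = c(T')$ whenever $u = v_{T'}$, and in all cases $|U(G_1)| \le k$ since the grouping scheme never produces a group with more than $k$ uncovered clients.

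First I would dispose of the case $u \neq v_{T'}$, i.e.\ $|U(T'_u)| > k$. Since $u$ is the deepest node with more than $k$ uncovered descendants, each $T'_{u_i}$ has at most $k$ uncovered clients, so $|U(G_1)| = \sum_i |U(T'_{u_i})| \ge |U(T'_u)| - 1 \ge k$, where the $-1$ accounts for $u$ itself possibly being uncovered. Combined with $|U(G_1)| \le k$ this forces $|U(G_1)| = k$. Plugging $|U(G_1)| = k$ into the estimate $c(\mathcal T(G_i)) \le 2 c(G_i) + \tfrac{k}{|U(G_i)|}\,\ell(U(G_i))$ from the proof of Claim~\ref{claim:groupbound} gives $c(\mathcal T(G_1)) \le 2 c(G_1) + \ell(U(G_1)) \le 2.5\, c(G_1) + \tfrac1\beta\, \ell(U(G_1))$, using $\beta < 1$; so the first \textbf{if} of Algorithm~\ref{alg:pruning} applies to $G_1$, a contradiction.

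It then remains to handle $u = v_{T'}$, so $|U(T')| \le k$, $U(G_1) = U(T')$, and $c(G_1) = c(T')$ (recall $v_{T'} \in R \cup (C-U)$ is already covered, so it contributes nothing to $U(G_1)$). Now I invoke the hypothesis that $T'$ fails the bounds of Lemma~\ref{lem:pruning_analysis}, splitting into the two failure modes. If $|U(T')| > \beta k$, then $\tfrac{k}{|U(T')|} < \tfrac1\beta$ and the same cost estimate yields $c(\mathcal T(G_1)) < 2 c(T') + \tfrac1\beta \ell(U(T')) \le 2.5\, c(T') + \tfrac1\beta \ell(U(T'))$. If instead $|U(T')| \le \beta k$ while $|U(T')| > k/2$ and $\ell(U(T')) < \Delta\, c(T')$, then $\tfrac{k}{|U(T')|} < 2$, hence $c(\mathcal T(G_1)) < 2 c(T') + 2\,\ell(U(T'))$, and combining $\ell(U(T')) < \Delta\, c(T')$ with the parameter inequality $2 - 0.5/\Delta \le 1/\beta$ (exactly as in the proof of Claim~\ref{claim:groupbound}) bounds this by $2.5\, c(T') + \tfrac1\beta \ell(U(T'))$. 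In either mode the first \textbf{if} of Algorithm~\ref{alg:pruning} would have fired on $G_1$ — contradiction. Finally I would remark that there is at least one group to begin with: the failure hypothesis forces $|U(T')| \ge 1$, and that uncovered client lies in a child subtree of $u$ (it cannot be $u$ itself when $u = v_{T'}$, since that node is covered), while if $u \neq v_{T'}$ then $|U(T'_u)| > k \ge 3$ already forces $u$ to have children.

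The main thing to get right is the bookkeeping rather than any single deep step: tracking the $-1$ slack in $|U(G_1)| \ge |U(T'_u)| - 1$ so the squeeze to $|U(G_1)| = k$ is airtight, remembering that the root of a forest tree is always covered so $U(G_1) = U(T')$ and $c(G_1) = c(T')$ when $u = v_{T'}$, and matching each failure mode of Lemma~\ref{lem:pruning_analysis} to the correct parameter inequality from Section~\ref{sec:pruning}. No new idea beyond the cost estimate already established for Claim~\ref{claim:groupbound} is needed.
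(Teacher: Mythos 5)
Your proof is correct and reaches the same contradiction as the paper's: if there were a single group $G_1$, then either the algorithm must have satisfied $u = v_{T'}$ with $U(T') = U(G_1)$ and $c(T') = c(G_1)$ (so the group bounds transfer to $T'$), or else the first \textbf{if}-branch of Algorithm~\ref{alg:pruning} would have fired. The only difference is presentational: the paper simply invokes Claim~\ref{claim:groupbound} to get $|U(G_1)| \le \beta k$ (which immediately rules out $u \ne v_{T'}$ since $|U(T_u)| \le 1 + \beta k \le k$) and then again to conclude $T'$ satisfies the lemma's bounds, whereas you re-derive those same bounds inline via the cost estimate $c(\mathcal T(G_1)) \le 2c(G_1) + \tfrac{k}{|U(G_1)|}\ell(U(G_1))$ and a direct squeeze on $|U(G_1)|$ when $u \ne v_{T'}$; both routes appeal to exactly the same inequalities, so this is an equivalent expansion rather than a genuinely different argument.
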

\begin{proof}
Suppose otherwise, i.e., that there is a single group $G_1$. By the previous claim, $|U(G_1)| \leq \frac{1}{\beta} \cdot k$. But then $|U(T_u)| \leq 1 + \frac{1}{\beta} \cdot k \leq k$ (because $1/(1-\beta) < 3 \leq k$). The only way for this to hold is if $u = v_{T'}$. But then $U(T') = U(G_1)$, recalling $v_{T'}$ is covered as it lies on a path in $\mathcal P$. So $c(T') = c(G_1)$ since the cost of a group includes the parent edges from its subtrees to $u$. Since $G_1$ satisfies the bounds from Claim \ref{claim:groupbound}, then $T'$ satisfies the bounds in the lemma statement which is a contradiction. So there are at least two groups.
\end{proof}

Let $G, G'$ be any two groups ordered so that $c(G) \geq c(G')$. Let $A$ be the set of $k - |U(G)|$ clients in $U(G')$ with largest $\ell_v$ values. Notice
\begin{equation}\label{eqn:twogroup}
c(\mathcal T(G,G')) \leq 2 \cdot c(G) + 2 \cdot c(G') + \min_{v \in U(G) \cup A} c(v,R).
\end{equation}
We know $|U(G)| + |U(G')| > k$ since we cannot merge the groups, so at least of $|U(G)|$ or $|U(G')|$ is greater than $k/2$. The analysis continues with these two cases.

~

\noindent
{\bf Case 1}: $|U(G)| > k/2$\\
If so, we can further bound \eqref{eqn:twogroup} using the following approach. First, the minimum of $2 \cdot c(v,R)$ can be bounded by the expected value of $2 \cdot c(u,R)$ if we sample $u$ as follows: with some probability $p$ we sample $v \sim U(G)$ uniformly at random and with probability $1-p$ we sample $v \sim A$ uniformly at random. Since the minimum is at most the expected value,
\begin{eqnarray*}
\min_{v \in U(G) \cup A} 2 \cdot c(v,R)
& \leq & p \cdot \frac{1}{|U(G)|} \sum_{v \in U(G)} 2 \cdot c(v,R) + (1-p) \cdot \sum_{v \in A} 2 \cdot c(v,R) \\
& = & p \cdot \frac{k}{|U(G)|} \cdot \ell(U(G)) + (1-p) \cdot \frac{k}{|A|} \cdot \ell(A) \\
& \leq & 2p\cdot \ell(U(G)) +  \frac{1-p}{1-\beta} \cdot \ell(A).
\end{eqnarray*}
The last bound is because $k/2 \leq |U(G)| \leq \beta \cdot k$ so $|A| = k-|U(G)| \geq (1-\beta)\cdot k$.
Using this, we can continue the bound from \eqref{eqn:twogroup} as follows:
\begin{eqnarray*}
c(\mathcal T(G,G')) & \leq & 2 \cdot c(G) + 2 \cdot c(G') + 2p \cdot \ell(U(G)) + \frac{1-p}{1-\beta} \cdot \ell(A) \\
& \leq & 4 \cdot c(G) + 2p \cdot \ell(U(G)) + \frac{1-p}{1-\beta} \cdot \ell(A) \\
& \leq & 2.5 \cdot c(G) + \left(\frac{1.5}{\Delta} + 2p\right) \cdot \ell(U(G)) + \frac{1-p}{1-\beta} \cdot \ell(A)
\end{eqnarray*}
The last bound is because $c(G) \leq \ell(U(G)) / \Delta$ holds by Claim \ref{claim:groupbound}.

Setting $p = \frac{\Delta - (1-\beta) \cdot 1.5}{\Delta \cdot (3-2\beta)} \approx 0.34302$ balances the two parameters and has the coefficients of $\ell(U(G))$ and $\ell(A)$ both be smaller than $\frac{1}{\beta}$. So $c(\mathcal T(G, G')) \leq 2.5 \cdot c(G) + \frac{1}{\beta} \cdot \ell(U(G) \cup A)$ which contradicts the fact that no tour would have been formed from $T'$.

~

\noindent
{\bf Case 2}: $|U(G)| \leq k/2$\\
We note that both $G$ and $G'$ satisfy the properties from Claim \ref{claim:groupbound} and that $|U(G')| > k/2$ since $|U(G)| \leq k$. Also, since $|U(G')| \leq \beta \cdot k$ then we have $|U(G)| \geq (1-\beta) \cdot k$.

~

\begin{claim}\label{claim:shift}
$\ell(U(G')) \leq 2\beta \cdot \ell(A)$.
\end{claim}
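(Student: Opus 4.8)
The goal is to show $\ell(U(G')) \leq 2\beta \cdot \ell(A)$, where $A$ is the set of the $k - |U(G)|$ clients in $U(G')$ with the largest $\ell_v$-values, and we are in Case 2, so $(1-\beta)\cdot k \leq |U(G)| \leq k/2$ and $k/2 < |U(G')| \leq \beta \cdot k$. The natural strategy is to exploit the fact that $A$ consists of the \emph{top} contributors in $U(G')$: the average $\ell_v$-value over $A$ is at least the average over all of $U(G')$. Concretely, $\frac{\ell(A)}{|A|} \geq \frac{\ell(U(G'))}{|U(G')|}$, which rearranges to $\ell(U(G')) \leq \frac{|U(G')|}{|A|} \cdot \ell(A)$.

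**Bounding the size ratio.** So it suffices to show $\frac{|U(G')|}{|A|} \leq 2\beta$. We have $|A| = k - |U(G)|$ and $|U(G')| \leq \beta k$, so $\frac{|U(G')|}{|A|} \leq \frac{\beta k}{k - |U(G)|}$. Since $|U(G)| \leq k/2$, we get $k - |U(G)| \geq k/2$, hence $\frac{|U(G')|}{|A|} \leq \frac{\beta k}{k/2} = 2\beta$. This completes the argument. The chain of reasoning is short: (i) $A$ being the top $|A|$ elements gives the average inequality, (ii) $|U(G')| \leq \beta k$ from Claim~\ref{claim:groupbound} (since $|U(G')| > k/2$), and (iii) $|A| = k - |U(G)| \geq k/2$ because $|U(G)| \leq k/2$ in this case.

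**Where care is needed.** The one point to be careful about is that $|A| \geq 1$ so the averaging argument is not vacuous — this holds since $|U(G)| \leq k/2 < k$ ensures $|A| = k - |U(G)| \geq k/2 \geq 1$ (using $k \geq 3$). One should also make sure the "top elements have above-average value" step is stated cleanly: for any multiset of nonnegative reals, the sum of the largest $m$ of them is at least $\frac{m}{n}$ times the total sum. Here $n = |U(G')|$ and $m = |A| \leq |U(G')|$ (indeed $|A| = k - |U(G)| \leq |U(G')|$ since $|U(G)| + |U(G')| > k$), so the selection of $A$ is well-defined and the inequality applies. I do not anticipate a genuine obstacle here; it is a routine averaging estimate combined with the size bounds already established for this case.

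**Summary of the steps I would write.** First, record that $|A| = k - |U(G)|$ is a valid choice of $|A|$ elements from $U(G')$ since $0 < |A| \leq |U(G')|$. Second, apply the averaging fact to get $\ell(A) \geq \frac{|A|}{|U(G')|}\cdot \ell(U(G'))$, i.e. $\ell(U(G')) \leq \frac{|U(G')|}{|A|}\cdot \ell(A)$. Third, bound $\frac{|U(G')|}{|A|} \leq \frac{\beta k}{k - |U(G)|} \leq \frac{\beta k}{k/2} = 2\beta$ using $|U(G')| \leq \beta k$ and $|U(G)| \leq k/2$. Combining these yields $\ell(U(G')) \leq 2\beta \cdot \ell(A)$, as claimed.
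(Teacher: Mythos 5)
Your proof is correct and matches the paper's own argument essentially line for line: both use the averaging fact that the top $|A|$ elements of $U(G')$ have total $\ell$-value at least $\frac{|A|}{|U(G')|}\cdot\ell(U(G'))$, then combine $|A| = k - |U(G)| \geq k/2$ with $|U(G')| \leq \beta k$ to bound the ratio by $2\beta$. The extra remarks you make (that $1 \leq |A| \leq |U(G')|$ so the averaging step is well-posed) are harmless elaborations of the same reasoning.
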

\begin{proof}
Since $A$ are the clients in $U(G')$ with largest $\ell_v$-values, then $\ell(A) \geq \frac{|A|}{|U(G')|} \cdot \ell(U(G')) = \frac{k-|U(G)|}{|U(G')|} \cdot \ell(U(G'))$. We know $|U(G)| \leq k/2$ and $|U(G')| \leq $, so $\frac{k-|U(G)|}{|U(G')|} \geq 1/(2\beta)$ as required.
\end{proof}

We continue bounding \eqref{eqn:twogroup} as in the previous case except we use a different probability $p$.

\begin{eqnarray*}
\min_{v \in U(G) \cup A} 2 \cdot c(v,R)
& \leq & p \cdot \frac{1}{|U(G)|} \sum_{v \in U(G)} 2 \cdot c(v,R) + (1-p) \cdot \sum_{v \in A} 2 \cdot c(v,R) \\
& = & p \cdot \frac{k}{|U(G)|} \cdot \ell(U(G)) + (1-p) \cdot \frac{k}{|A|} \cdot \ell(A) \\
& \leq & \frac{p}{1-\beta} \cdot \ell(U(G)) + 2 \cdot (1-p) \cdot \ell(A)
\end{eqnarray*}

From this, we have
\begin{eqnarray*}
c(\mathcal T(G,G')) & \leq & 2 \cdot c(G) + 2 \cdot c(G') + \frac{p}{1-\beta} \cdot \ell(U(G)) + 2 \cdot (1-p) \cdot \ell(A) \\
& \leq & 2.5 \cdot c(G) + 1.5 \cdot c(G') + \frac{p}{1-\beta} \cdot \ell(U(G)) + 2 \cdot (1-p) \cdot \ell(A) \\
& \leq & 2.5 \cdot c(G) + \frac{1.5}{\Delta} \cdot \ell(U(G')) + \frac{p}{1-\beta} \cdot \ell(U(G)) + 2 \cdot (1-p) \cdot \ell(A) \\
& \leq & 2.5 \cdot c(G) + 2\beta \cdot \frac{1.5}{\Delta} \cdot \ell(A) + \frac{p}{1-\beta} \cdot \ell(U(G)) + 2 \cdot (1-p) \cdot \ell(A) \\
& = & 2.5 \cdot c(G) +  \frac{p}{1-\beta} \cdot \ell(U(G)) + \left(2 \cdot (1-p) +\frac{3\beta}{\Delta}\right) \cdot \ell(A) 
\end{eqnarray*}
Setting $p = \frac{2 + 3\beta/\Delta}{2 + 1/(1-\beta)} \approx 0.69425$ balances these coefficients and has them both be at most $1/\beta$. Again this means $c(\mathcal T(G, G')) \leq 2.5 \cdot c(G) + \frac{1}{\beta} \cdot \ell(U(G) \cup A)$ which contradicts the fact that no tour would have been formed from $T'$.

To conclude, we have shown that if the algorithm terminated with $T'$ not satisfying the properties in the statement of Lemma \ref{lem:pruning_analysis} then the groups considered in the final iteration that created no tours would satisfy the bounds in Claim \ref{claim:groupbound} and there were at least two groups. But the previous two cases show that a tour would indeed have been created in the second half of the body of the while loop in Algorithm \ref{alg:pruning}, a contradiction.
This completes the proof of Lemma \ref{lem:pruning_analysis}

%%%%%%%%%%%%%%%%%%%%%%%%%%%%%%%%%%%%%%%%%%%%%%%%%%%%%%%%%%%%%%%%%%%%%%%%
\subsection{Proof of Lemma \ref{lem:splitting}}\label{sec:splitting_analyze}

Recall $C_P$ denotes the clients on $P$ and $P'$ was an ordering of nodes of $P$ and its corresponding trees $T'$ in $F'$. For each subsequence $\sigma$ that $P'$ was partitioned into, let $E_\sigma$ denote all edges of $P$ and $F'$ that lie between at least one pair of nodes in the subsequence. The following can be carefully verified. Here, every tree $T'$ mentioned is a tree in $F'$ whose root lies on $P$.
\begin{itemize}
\item Every edge of $P$ lies in at most one set of the form $E_\sigma$.
\item For a tall tree $T'$, every edge of $T'$ lies in at most one set of the form $E_\sigma$ with one exception: edges on a path from $v$ to $\texttt{trunk}(T')$ for some $v \in U(T') \cap S_{P'}$ lie in at most two sets of the form $E_{\sigma}$.
\item For a short tree $T'$, every edge of $T'$ lies in at most one set of the form $E_\sigma$ with one exception: edges on a path from $v$ to the root of $T'$ for some $v \in U(T') \cap S_{P'}$ lie in at most two sets of the form $E_{\sigma}$.
\item No tree $T'$ contains more than one node in $S_{P'}$.
\end{itemize}

The cost of the tours produced by this procedure is at most $\sum_\sigma 2 \cdot c(E_\sigma) + \sum_{v \in S_{P'}} 2 \cdot c(v,R) + \sum_{T' \texttt{ large}} \sum_{v \in S_{P'} \cap U(T')} 2 \cdot c(v,R)$.
The last term is because a vertex in $U(T')$ for a large tree will connect to its nearest depot twice if it is in $S_{P'}$.

For a short tree $T'$, if some $v \in U(T')$ lies in $S_{P}$ then the only edges lying in two different $E_{\sigma}$ lie on a path of cost at most $c(\texttt{trunk}(T'))$ since every path in $T'$ is at most the cost of the trunk.
For a tall tree $T'$, if some $v \in U(T')$ lies in $S_{P}$ then the only edges lying in two different $E_{\sigma}$ lie on a path of cost at most $c(T') - c(\texttt{trunk}(T'))$ since the edges of this path are disjoint from the trunk.

If we pick $\tau$ randomly, each $v \in C_P \cup \sum_{T'} U(T')$ lies in $S_{P'}$ with probability exactly $1/k$. So for a short tree, the probability we charge an extra set of edges of cost at most $c(\texttt{trunk}(T')$ is $\frac{|U(T')|}{k}$
and for a tall tree the probability we charge an extra set of edges of cost at most $c(T') - c(\texttt{trunk}(T'))$ is also $\frac{|U(T')|}{k}$. So the cost of the tours for the best $\tau$ is at most the expected cost of the tours over the random choice of $\tau$, which is at most
\[ 2 \cdot c(P) + \ell(C_P) + \sum_{T' : \text{short}} \left[2 c(T') + 2 \cdot \frac{|U(T')|}{k} \cdot c(\texttt{trunk}(T')) + \frac{|U(T')|}{k} \cdot \ell(U(T'))\right] \]
\[+ \sum_{T : \text{tall}} \left[ 2c(T') + 2\cdot \frac{|U(T')|}{k} \cdot (c(T') - c(\texttt{trunk}(T'))) + 2\cdot \frac{|U(T')|}{k} \ell(U(T')) \right]. \]

\begin{lemma}
For each tree $T'$ rooted in $P$, its corresponding term in the previous sum is at most $(2 + |U(T')|/k) \cdot c(T) + 1.5 \cdot \ell(U(T'))$.
\end{lemma}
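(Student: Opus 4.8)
The plan is to analyze the two cases (short tree, tall tree) separately, in each case simplifying the bracketed term using the defining inequality for that class of tree together with the bound $|U(T')| \le \beta k$ from Lemma~\ref{lem:pruning_analysis}, and finally invoking the constants' defining inequalities from Section~\ref{sec:pruning} to collapse everything into the claimed form $(2 + |U(T')|/k)\cdot c(T') + 1.5\cdot \ell(U(T'))$. Throughout I will write $t := |U(T')|/k \le \beta < 1$ for brevity, so the target is $(2+t)c(T') + 1.5\,\ell(U(T'))$.

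First, for a \textbf{short} tree $T'$, the term is $2c(T') + 2t\cdot c(\texttt{trunk}(T')) + t\cdot\ell(U(T'))$. By definition of short, $c(\texttt{trunk}(T')) \le \tfrac{1}{2}c(T') + \tfrac{1}{4}\ell(U(T'))\cdot t$, so $2t\cdot c(\texttt{trunk}(T')) \le t\cdot c(T') + \tfrac{t^2}{2}\ell(U(T'))$. Substituting gives an upper bound of $(2+t)c(T') + (t + t^2/2)\ell(U(T'))$, and since $t \le \beta < 1$ we get $t + t^2/2 \le \beta + \beta^2/2$; it then suffices to check $\beta + \beta^2/2 \le 1.5$, which holds for $\beta = 0.5902302342$ (indeed $\beta + \beta^2/2 \approx 0.764 < 1.5$). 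This case is essentially immediate.

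Next, for a \textbf{tall} tree $T'$, the term is $2c(T') + 2t\cdot(c(T') - c(\texttt{trunk}(T'))) + 2t\cdot\ell(U(T'))$. Since $c(\texttt{trunk}(T')) \ge 0$ we have $c(T') - c(\texttt{trunk}(T')) \le c(T')$, so the term is at most $(2+2t)c(T') + 2t\cdot\ell(U(T'))$. This has too large a coefficient on $c(T')$, so I would instead use the tall condition $c(\texttt{trunk}(T')) > \tfrac{1}{2}c(T') + \tfrac{1}{4}\ell(U(T'))\cdot t$ to say $c(T') - c(\texttt{trunk}(T')) < \tfrac12 c(T') - \tfrac14 t\,\ell(U(T'))$, whence $2t(c(T')-c(\texttt{trunk}(T'))) < t\cdot c(T') - \tfrac{t^2}{2}\ell(U(T'))$. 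Substituting yields an upper bound of $(2+t)c(T') + (2t - t^2/2)\ell(U(T'))$, and since $t \le \beta$ one checks $2\beta - \beta^2/2 \le 1.5$ — with $\beta \approx 0.59$ this is $1.18 - 0.174 \approx 1.007 < 1.5$, so this holds comfortably. I may additionally need the property $\ell(U(T')) \ge \Delta\cdot c(T')$ from Lemma~\ref{lem:pruning_analysis} in the regime $|U(T')| > k/2$ to handle the bracketed term's $c(T')$ coefficient if the purely algebraic bound above turns out not to suffice for the global budget; but for the stated per-tree claim the two cases above already work using only the short/tall dichotomy and $t \le \beta$.

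The main obstacle is bookkeeping, not depth: one must be careful that the trunk-based inequality is applied in the correct direction in each case (it is an upper bound on $c(\texttt{trunk})$ for short trees and a lower bound for tall trees), and that after substitution the residual coefficient on $\ell(U(T'))$ — which is $t + t^2/2$ in the short case and $2t - t^2/2$ in the tall case — is dominated by $1.5$ using $t \le \beta$. Since $\beta$ was chosen to satisfy $1.5 + (\beta - 0.5)/\Delta \le 1/\beta$ and the other listed bounds, and since $\beta < 1$, both residual coefficients stay well below $1.5$, so no extra slack is lost here. The only slightly delicate point is confirming that dropping $c(\texttt{trunk}(T')) \ge 0$ is \emph{not} what we do in the tall case — we genuinely need the strict tall inequality to beat the coefficient $2+2t$ down to $2+t$ — which I would state explicitly to avoid confusion.
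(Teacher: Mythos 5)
Your proof is correct and follows essentially the same approach as the paper: split on the short/tall dichotomy, substitute the defining trunk inequality, and check that the resulting coefficient on $\ell(U(T'))$ is at most $1.5$. Writing $p = |U(T')|/k$ (your $t$), you obtain residual coefficients $p + p^2/2$ in the short case and $2p - p^2/2$ in the tall case, both at most $1.5$ whenever $p \le 1$, and you correctly note that the additional property $\ell(U(T')) \ge \Delta \cdot c(T')$ from Lemma~\ref{lem:pruning_analysis} is not needed for this per-tree claim but only in the subsequent step that converts $(2+p)\cdot c(T')$ into $2.5 \cdot c(T') + \tfrac{1}{\beta}\cdot\ell(U(T'))$.
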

\begin{proof}
For brevity in this proof, let $t = c(\texttt{trunk}(T')), c = c(T'), \ell = \ell(U(T'))$ and $p = |U(T')|/k$. For small $T$, this means $t \leq c/2 + \ell/(4p)$ and the term in the sum for $T$ is then bounded by
\[ 2c + 2pt + \ell \leq 2c + 2p(c/2 + \ell/(4p)) + p\ell = (2+p)c + 1.5 \ell. \]
For tall trees we have $t \geq c/2 + p\ell/4$ and the term in the sum for $T$ is then bounded by
\[ 2c + 2p(c-t) + 2p\ell \leq 2c + 2p(c-c/2-\ell/(4p)) + 2\ell = (2+p)c + 1.5\ell. \]
\end{proof}

Summarizing, the cost of tours constructed in this step for a path $P$ and its associated trees $T$ is at most
\[ 2 c(P) + \ell(C_P) + \sum_T (2+|U(T')|/k) c(T') + 1.5 \ell(U(T')) \]
We still need to improve this a bit since the coefficient in front of $c(T')$ can be as much as $2+\beta$ whereas we need it to be a most $2.5$ in our final analysis.

For a tree $T'$ with $|U(T')|/k \leq 1/2$ the coefficient of $c(T')$ is already at most $2.5$. For a tree with $|U(T')|/k > 1/2$, we do still have $|U(T')|/k \leq \beta$ by Lemma \ref{lem:pruning_analysis} and, further, $c(T') \leq \ell(U(T')) / \Delta$.
So in this case we can bound the contribution of $T$ to the cost of the tours by
\[ (2 + \beta) c(T') + 1.5 \ell(U(T')) \leq (2 + 0.5) c(T') + (1.5 + (\beta-0.5)/\Delta) \ell(U(T')) \leq 2.5 \cdot c(T') + \frac{1}{\beta} \cdot \ell(U(T')). \]
Summing over all $T'$ rooted in $P$, the cost of all tours produced by splitting $P$ and its trees in this way is at most
\[ 2 c(P) + \ell(C_P) + \sum_T 2.5 c(T) + \frac{1}{\beta} \ell(U_T). \]
This completes the proof of Lemma \ref{lem:splitting}.

%%%%%%%%%%%%%%%%%%%%%%%%%%%%%%%%%%%%%%%%%%%%%%%%%%%%%%%%%%%%%%%%%%%%%%%%

\bibliographystyle{alpha}
\bibliography{refs}

\end{document}